\documentclass[12pt, reqno]{amsart}

%%%%%%%%%%%%%%%%%%%%%  MY STUFF %%%%%%%%%%%%%%%%%%%%%%%%%%%%%%%%%%
%\makeatletter
%\g@addto@macro{\endabstract}{\@setabstract}
%\makeatother

%\usepackage{epsfig}
\usepackage{graphics, stackrel}
\usepackage{amsmath, amssymb, amsthm}
\usepackage{graphicx}
\usepackage{verbatim}
\usepackage{amsfonts}

\usepackage{natbib}

\usepackage{enumitem}
%font
%\usepackage{lmodern}
\usepackage[T1]{fontenc}
\usepackage{mathpazo}
%\usepackage{tgpagella}

%subfloats / figures
\usepackage{caption}
\usepackage{subcaption}

% For pandas latex tables
\usepackage{booktabs}

\usepackage{fancyvrb}
\usepackage[usenames, dvipsnames, svgnames,table]{xcolor}
\usepackage{mdwlist}

% lists
\usepackage{enumitem}
\setlist[enumerate]{itemsep=2pt,topsep=3pt}
\setlist[itemize]{itemsep=2pt,topsep=3pt}
\setlist[enumerate,1]{label=(\alph*)}

% caligraphic
\usepackage{mathrsfs}
\usepackage{bbm}

%% page layout
\usepackage[left=1.25in, right=1.25in, top=1.0in, bottom=1.15in, includehead, includefoot]{geometry}
\usepackage[citecolor=blue, colorlinks=true, urlcolor=blue, linkcolor=blue]{hyperref}

% nice inequalities
\renewcommand{\leq}{\leqslant}
\renewcommand{\geq}{\geqslant}

% inner product

\usepackage[ruled, linesnumbered]{algorithm2e}

%extra spacing

%horizonal line

% skip a line between paragraphs, no indentation
\setlength{\parskip}{1.5ex plus0.5ex minus0.5ex}
\setlength{\parindent}{0pt}

% footnote without a maker (blfootnote)
\newcommand\blfootnote[1]{%
  \begingroup
  \renewcommand\thefootnote{}\footnote{#1}%
  \addtocounter{footnote}{-1}%
  \endgroup
}

%\DeclareMathOperator{\overset{\circ}}{int}

% mics short cuts and symbols

\newcommand{\setntn}[2]{ \{ #1 : #2 \} }

\newcommand{\iidsim}{\stackrel {\textrm{ {\sc iid }}} {\sim} }
\newcommand{\1}{\mathbbm 1}
\newcommand{\la}{\langle}
\newcommand{\ra}{\rangle}

% d for integrals
\newcommand*\diff{\mathop{}\!\mathrm{d}}

% Special symbols and shortcuts
\newcommand{\rR}{\mathcal R}

\newcommand{\given}{\,|\,}

\newcommand{\bB}{\mathscr B}

\newcommand{\hH}{\mathcal H}

\newcommand{\iI}{\mathcal I}

\newcommand{\eE}{\mathcal E}
\newcommand{\fF}{\mathscr F}

\newcommand{\lL}{\mathcal L}

\newcommand{\RR}{\mathbbm R}

\newcommand{\NN}{\mathbbm N}

\newcommand{\PP}{\mathbbm P}

\newcommand{\XX}{\mathsf X}
\newcommand{\WW}{\mathsf W}

\newcommand{\EE}{\mathbbm E}

\theoremstyle{plain}
\newtheorem{theorem}{Theorem}[section]

\newtheorem{lemma}[theorem]{Lemma}
\newtheorem{proposition}[theorem]{Proposition}

\theoremstyle{definition}

\newtheorem{remark}{Remark}[section]

\newtheorem{assumption}{Assumption}[section]

%\DeclareTextFontCommand{\emph}{\bfseries}

%%%%%%%%%%%%%%%%%% end my preamble %%%%%%%%%%%%%%%%%%%%%%%%%%%%%%%%%%%

\usepackage[normalem]{ulem}

\begin{document}

%\setstcolor{red}

\title{}

\begin{center}
  \LARGE
  Stability of Equilibrium Asset Pricing Models: \\ A Necessary and Sufficient
  Condition\blfootnote{We thank Ippei Fujiwara, Sean Meyn,
      Thomas J.\ Sargent, co-editor Guillermo Ordo\~{n}ez and two referees for
      many valuable comments and suggestions, as well as
      participants at the plenary lecture of the SNDE 2018 Annual symposium at
      Keio University.  In addition, special thanks are due to Miros\l{}awa
  Zima for detailed discussion of local spectral radius conditions.  Financial
  support from ARC grant FT160100423 and Alfred P. Sloan Foundation grant
  G-2016-7052 is also gratefully acknowledged.}

    \vspace{1em}

  \large
  Jaroslav Borovi\v{c}ka\textsuperscript{a}
  and John Stachurski\textsuperscript{b} \par \bigskip

  \small
  \textsuperscript{a} New York University and NBER \par
  \textsuperscript{b} Research School of Economics, Australian National University \bigskip

  \normalsize
  \today
\end{center}

\begin{abstract}
    We obtain an exact necessary and sufficient condition for the existence and uniqueness of equilibrium asset prices in infinite horizon, discrete-time, arbitrage free environments.  Using local spectral radius methods, we connect the condition, and hence the problem of existence and uniqueness of asset prices, with the recent literature on stochastic discount factor decompositions.  Our results include a globally convergent method for computing prices whenever they exist.  Convergence of this iterative method itself implies both existence and uniqueness of equilibrium asset prices.
    %Finally, we discuss computation of the test value associated with our
    %condition, providing a Monte Carlo method that is naturally
    %parallelizable.
    \vspace{1em}

    \noindent
    \textit{JEL Classifications:} D81, G11 \\
    \textit{Keywords:} Asset pricing, equilibrium prices, spectral methods
\end{abstract}

\maketitle

%section
\section{Introduction}

One fundamental problem in economics is the pricing of an asset
paying a stochastic cash flow with no natural termination point, such as
a sequence of dividends.  In discrete-time no-arbitrage environments, the
equilibrium price process $\{P_t\}_{t \geq 0}$ associated with a
dividend process $\{D_t\}_{t \geq 1}$ obeys
\begin{equation}
    \label{eq:pd}
    P_t
    = \EE_t \, M_{t+1} ( P_{t+1} + D_{t+1} )
    \quad \text{for all } t \geq 0,
\end{equation}
where $\{ M_t\}$ is the sequence of single period stochastic
discount factors.\footnote{See, for example,~\cite{kreps1981arbitrage},
    \cite{hansen_richard:1987} or~\cite{duffie2001dynamic}.  Here and below,
    prices are on ex-dividend contracts.  (Cum-dividend contracts are a simple
extension.)} Two questions immediately
arise in connection with
these dynamics:
\begin{enumerate}
    \item[1.] Given $\{D_t, M_t\}_{t \geq 1}$, does there exist a unique
        equilibrium price process?
    \item[2.] How can we characterize and evaluate such prices
        whenever they exist?
\end{enumerate}

These questions have become more pressing for two reasons.  First, models of
dividend processes and state price deflators are becoming more sophisticated,
in an ongoing effort to better match financial data and resolve outstanding
puzzles (see, e.g., recent iterations of the models in
\cite{campbell1999force}, \cite{barro:2006}, or \cite{bansal2004risks}).  This
complexity makes questions 1--2 challenging, especially in quantitative
applications with discount rates close to the growth rates of underlying cash
flows.  There have been few sufficient conditions proposed that (a) imply
existence and uniqueness of equilibrium prices, (b) are weak enough to be
useful in modern quantitative analysis, and (c) are practical enough to
implement in interesting applied settings.

The second reason that questions 1--2 have become more pressing is the
accumulating evidence that nonlinearities embedded in the original models
matter for quantitative analysis.  For example,~\cite{pohl2018higher} and
\cite{lorenz2020nonlinear} show that the log-linearization techniques used to
solve asset pricing models can lead to large distortions in the equity premium
and price volatility.  These findings increase the need for practical methods
for investigating the underlying structure of modern asset pricing models.

In this paper, we introduce a condition for existence and uniqueness of equilibria
that is both weak enough to hold in realistic applications---in fact necessary
as well as sufficient---and practical in the sense that testing the condition
focuses on a single value.  This value is referred to below as the
\emph{stability exponent}.
To illustrate key ideas, consider the case of stationary dividend growth,
which is the standard assumption in quantitative applications.  Seeking a
stationary price-dividend ratio, we rewrite (\ref{eq:pd}) as
\begin{equation}
    \label{eq:pd2}
    \frac{P_t}{D_t}
       = \EE_t \left[
                 M_{t+1} \frac{D_{t+1}}{D_t}
                 \left(\frac{P_{t+1}}{D_{t+1}} + 1\right)
               \right].
\end{equation}

Let $\Phi_{t+1}:=M_{t+1}\left(D_{t+1}/D_t\right)$.
For this class of models, the stability exponent is
\begin{equation}
    \label{eq:kr_nonstationary}
    \lL_\Phi :=
        \lim_{n \to \infty}
        \frac{\ln \psi_n}{n}  ,
\end{equation}
where $\psi_n := \EE \prod_{t=0}^{n-1} \Phi_{t+1}$ is the expectation of the
$n$-period pricing kernel adjusted for dividend growth.  Uncertainty in the
discount process $\{M_t\}$ and dividend growth
$\{D_{t+1}/D_t\}$ is driven by an irreducible state process, and $\EE$ takes
expectations over the unique stationary distribution.

We show that, in this setting, existence and uniqueness of an equilibrium
price process  is exactly equivalent to the statement $\lL_\Phi < 0$.  In
addition, successive approximations converge globally to equilibrium prices if
and only if $\lL_\Phi < 0$.  We also show that convergence of this algorithm
itself implies that the limit is an equilibrium and, moreover, that $\lL_\Phi
< 0$.  Therefore, convergence from a single initial condition implies that the
limit is an equilibrium, and that these equilibrium prices are unique and globally
attracting.

%and, moreover, that unique, globally attracting prices exist for
%\emph{any} dividend process that differs from $\{D_t\}$ by a multiplicative
%stationary component with a finite first moment.} \textcolor{blue}{[We may
%want to remove or modify the last statement if we don't include that result.]

Interpreting the condition $\lL_\Phi < 0$ is straightforward.
Let $p_n(x) := \EE_x \, \prod_{t=0}^{n-1}  \Phi_{t+1}$ denote the price of a claim
on the dividend paid out $n$ periods ahead at the current state $x$,
normalized by the current dividend. The pricing of these so-called dividend
strips has been analyzed extensively, see, e.g.,
\cite{binsbergen_brandt_koijen:2012} or
\cite{binsbergen_hueskes_koijen_vrugt:2013}. Due to irreducibility, $\EE_x$
can be replaced by $\EE$ for limiting events, so $p_n(x) \approx \psi_n$ for
large $n$. The condition then states that, asymptotically, prices of long-horizon dividend
strips
decay to zero at a geometric rate.\footnote{From~\eqref{eq:kr_nonstationary} and $p_n(x) \approx
    \psi_n$ we have $p_n(x)^{1/n} \approx \exp(\lL_\Phi)$ for large $n$, so
    the dividend strip price $p_n(x)$ goes to zero like $\exp(n \lL_\Phi)$ when $\lL_\Phi < 0$. The value
$-\lL_\Phi$ is the decay rate.}

The intuition is particularly simple in the case where
dividends are stationary. We can then replace $\Phi_{t+1}$ in
    (\ref{eq:kr_nonstationary}) with $M_{t+1}$.  Now $p_n(x) :=
    \EE_x \, \prod_{t=0}^{n-1}  M_{t+1}$ is the price of a risk-free
    zero-coupon bond with maturity $n$ at current state $x$, so $y_n(x) := -
    \ln p_n(x) / n$ is the corresponding yield to maturity.
    Since $p_n(x) \approx \psi_n$ for large $n$, the condition $\lL_\Phi <
    0$ means that, in the limit, yields on risk-free long bonds are positive.
This indicates a fundamental preference for current payoffs over future
payoffs,  which generates finite, well defined prices for stationary infinite horizon
cash flows.

%\footnote{To
        %clarify, for a
        %pure discount zero-coupon bond of maturity $n$, the price $p_n$ and rate
        %of return $r_n$ are linked by $p_n = (1/(1+r_n))^n$.  Rearranging gives
        %$p_n^{-1/n} = (1+r_n)$.  Taking logs and calling $y_n := \ln (1+r_n)$ the
    %yield to maturity gives our last equation.}

%(The modification for nonstationary flows is discussed below.)

%The value $\lL_M$ corresponds to the
%asymptotic growth rate of the average price process $\{\psi_t\}$, since
%existence of the limit in \eqref{eq:kr} implies that
%%
%\begin{equation}
    %\label{eq:largt}
    %\ln \frac{\psi_{t+1}}{\psi_t} \approx \lL_M
    %\quad \text{for large } t.
%\end{equation}
%%

While $\lL_\Phi < 0$ has a
natural interpretation, it is striking that this condition
is necessary as well as sufficient for existence, and hence exactly
characterizes the set of models with well defined equilibrium prices.  This
result rests on irreducibility mentioned above, which is a mild condition, and
a  ``local spectral radius'' result due to \cite{zabreiko1967bounds} and
\cite{forster1991local}. Using this result, we show that, for any positive
cash flow with finite first moment, the asymptotic mean growth rate of its
discounted payoff stream is equal to the principal eigenvalue of an associated
valuation operator, which is in turn equal to the exponential of $\lL_\Phi$.
If the principal eigenvalue equals or exceeds unity, then the sum of expected
discounted payoffs grows without bound.

An operator-theoretic way to understand our results is to view $\Phi_{t+1}$ as
a random ``contraction factor'' for an operator that has, as its fixed point,
an equilibrium price function.  If there is a constant $\theta$ with
$\Phi_{t+1} \leq \theta < 1$ with probability one, then valuation equations
\eqref{eq:pd} and \eqref{eq:pd2} imply this operator will be a contraction of
modulus $\theta$, yielding existence of a unique equilibrium.  However, in
most applications, $\Phi_{t+1} > 1$ holds on a set of positive probability,
due to the fact that payoffs in bad states have high value.  Thus, a direct
one step contraction argument is problematic.  Hence we adopt the weaker
condition $\lL_\Phi < 0$, which requires instead that $\Phi_{t+1} < 1$ holds
on average over the long run, and show that this is both necessary and
sufficient.\footnote{The growth rate $\lL_\Phi$ is also connected to the
    integrated Lyapunov exponent (see, e.g., \cite{knill1992positive}), which,
    for the process $\{\Phi_t\}$, takes the form $\iI_\Phi := \lim_{n \to
    \infty} \frac{1}{n} \, \sum_{t=1}^n \EE \, \ln \Phi_t$.  When $\{\Phi_t\}$
    is stationary, this reduces to $\EE \, \ln \Phi_t$.  Jensen's inequality
    yields $\iI_\Phi \leq \lL_\Phi$, and, as $\lL_\Phi < 0$ is necessary and
    sufficient for existence and uniqueness, the inequality $\iI_\Phi < 0$ is
    necessary but not sufficient.  This is because it takes into account only
    the marginal distribution of $\Phi_t$, and hence ignores long epochs
    during which the SDF exceeds unity.  Stability requires controlling
    persistence in the SDF process, which requires restrictions on the full
joint distribution.}

We also discuss methods for calculating $\lL_\Phi$ when no analytical solution
exists.  Similar to \cite{backus_gregory_zin:1989}, we show that, when the
state space is finite, the rate of decay of prices of long-term dividend strips and hence
$\lL_\Phi$ can be calculated using numerical linear algebra.  For other cases,
we propose a Monte Carlo method that involves simulating independent
realizations of the pricing kernel.  This method is inherently parallelizable,
sufficiently accurate for the applications we consider, and well suited to
settings where the state space is large.\footnote{Appendix~\ref{s:ctv}
    provides details.  Overall, we find that modern quantitative asset pricing
    studies are too complex---and too close to the boundary between stability
    and instability---to allow for successful use of purely analytical
sufficient conditions.}

As one illustration of the method, we consider a model of asset prices with
Epstein--Zin recursive utility, multivariate cash flows and time varying
volatility studied in \cite{schorfheide2018identifying}.  Hitherto no results
have been available on existence and uniqueness of equilibria in the
underlying theoretical model. We show that $\lL_\Phi < 0$ holds at and in the
neighborhood of the benchmark parameterization in a global
numerical approximation of the model.   This indicates
existence of a unique set of equilibrium prices, along with a globally
convergent method of computing them.  The fact that our conditions are
necessary as well as sufficient allows us to examine how far this positive
result can be pushed as we shift parameters relative to the benchmark.

%Our main theorem also generalizes previous
%results on existence and uniqueness of asset prices obeying the dynamics in
%\eqref{eq:pd} over an infinite horizon.  For example, when consumption and
%dividends are driven by a finite state Markov process, it is well known that
%existence and uniqueness of equilibrium prices holds whenever the spectral
%radius of a certain valuation matrix is less than unity (see, e.g.,
%\cite{mehra1985equity}).   We show that, for such problems, the log of the
%spectral radius in question is equal to $\lL_M$, so this standard result is a
%special case of our main theorem.  Since our results are
%necessary as well as sufficient under irreducibility of the state process, they further extend our understanding of the finite state case.

We also encompass and extend the classic result of
\cite{lucas1978asset}, who studied a model with infinite state space and SDF
of the form
\begin{equation}
    \label{eq:ms}
    M_{t+1} = \beta \frac{u'(C_{t+1})}{u'(C_t)}.
\end{equation}
Here $\{C_t\}$ is a stationary consumption process, $\beta \in (0, 1)$ is a
state independent discount component and $u$ is a period utility function.
Using a change of variable, \cite{lucas1978asset} obtains a modified pricing
operator with contraction modulus equal to $\beta$, and hence, by Banach's
contraction mapping principle, a unique equilibrium price process.  His
theorem is a special case of our main result.

While \cite{lucas1978asset} frames his study in a space of bounded functions,
our analysis admits unbounded solutions.  This is achieved by embedding the
equilibrium problem in a space of candidate solutions with finite first
moments.  Such a setting is arguably more natural for the study of forward
looking stochastic sequences, since the forward looking restriction is itself
stated in terms of expectations.  Adopting this setting allows us to
generalize the existence and uniqueness results for equilibrium prices
obtained in \cite{calin2005solving} and \cite{brogueira2017existence}, which
extend \cite{lucas1978asset} by allowing for habit formation and unbounded
utility.\footnote{Not surprisingly, our results also generalize the simple
    risk neutral case $M_t \equiv \beta$, which is linear and hence easily
    treated by standard methods (see, e.g., \cite{blanchard1980solution}).
    The existence of a unique equilibrium when $\beta \in (0,1)$ is a special
    case of our results because the $n$ period state price deflator is just
    $\beta^n$, so, by the definition of $\lL_\Phi$ in
    \eqref{eq:kr_nonstationary} and with stationary consumption, we have
    $\lL_\Phi=\ln \beta$. The condition $\beta \in (0,1)$ therefore
    implies $\lL_\Phi < 0$ and hence existence of a unique solution.}

Our work is also connected to the literature on stochastic discount factor
decompositions found in \cite{alvarez2005, hansen2009long, hansen2012dynamic,
borovivcka2016misspecified, christensen2017nonparametric, qin2017long} and
other recent studies.  These decompositions are used to extract a permanent
growth component and a martingale component from the stochastic discount
process, with the rate in the permanent growth component being driven by the
principal eigenvalue of the valuation operator associated with stochastic
discount factor.
We show that the log of this principal eigenvalue is equal to $\lL_\Phi$ in our
setting, using the local spectral radius result discussed above.

In addition, our work is related to
\cite{pohl2019rel} and \cite{christensen2020existence}, who provide conditions for existence and
    uniqueness of recursive utilities in settings where the state space and
    rewards are unbounded.  While the objects in play are different
    (recursive utilities vs asset prices), the techniques are related because
    both sets of problems treat forward looking recursions over unbounded
    state spaces driven by exogenous state processes.  The connection can be summarized as follows: Our results
    are not applicable for most recursive utility problems concerning 
    existence and uniqueness, where
    nonlinearities require specialized techniques (e.g., the Orlicz space
    methods in \cite{christensen2020existence} or Jensen-type bounds in
    \cite{pohl2019rel}).  At the same time, our methods have
    comparative advantage for asset pricing, because they exploit
    positivity and affine structure (which
    follow from nonexistence of arbitrage). This allows us to obtain
    conditions for existence and uniqueness that are necessary as well as
    sufficient.\footnote{By the same token,
    our work is connected to \cite{borovivcka2020necessary}, which also
    treats existence and uniqueness of recursive utilities, with an emphasis on
    Epstein--Zin preferences.  As with \cite{christensen2020existence}, the
    topics differ but the focus on forward looking recursions is shared.
    Regarding a comparison of content, most of the same comments apply.
    (Orlicz space methods in \cite{christensen2020existence} are replaced by
    monotone concave operator methods in \cite{borovivcka2020necessary}.)
    However, \cite{borovivcka2020necessary} is less comparable
    to the present paper than \cite{christensen2020existence}, since, in
    the former, the focus is on compact states.}  In addition, we use this same no arbitrage structure,
    combined with properties of positive linear operators, to translate
    spectral radius conditions over valuation operators into the analytically
    and computationally convenient stability exponent $\lL_\Phi$.  Finally, we
offer techniques for computing $\lL_\Phi$ analytically, as well as via linear algebraic
methods and through Monte Carlo.

The rest of the paper is structured as follows. The main results are
presented in Section~\ref{s:mr}.  Sections~\ref{s:aar}--\ref{s:aii} treat
applications with stationary dividend growth and
Section~\ref{s:c} concludes.   Appendix~\ref{s:aiii} discusses models with
stationary dividends, rather than stationary dividend growth.  A
discussion of numerical methods for implementing our test can be found in
Appendix~\ref{s:ctv}.  Long proofs are deferred until Appendix~\ref{s:ax}.
Computer code that replicates our numerical results and figures can be found
at \url{https://github.com/jstac/asset_pricing_code}.

\section{A Necessary and Sufficient Condition}

\label{s:mr}

In this section we present our framework and state our main results.

\subsection{Environment}

\label{ss:msol}

We will work with the generic forward looking model
\begin{equation}
    \label{eq:fwl}
    Y_t = \EE_t \left[ \Phi_{t+1} (Y_{t+1} + G_{t+1} ) \right]
    \quad \text{for } t \geq 0,
\end{equation}
where $\{(\Phi_t, \, G_t)\}$ is a given stochastic process,
defined on some underlying probability space $(\Omega, \fF, \PP)$, and
$\{Y_t\}$ is endogenous.  While
other interpretations are possible,  it is convenient to refer to $\{\Phi_t\}$
as the \emph{stochastic discount factor} and $\{G_t\}$ as the \emph{cash
flow}.

Equations \eqref{eq:pd} and \eqref{eq:pd2} are special cases
of this recursion. In the latter case, the price dividend
ratio $P_t/D_t$ is the endogenous process, $\Phi_{t+1} = M_{t+1}D_{t+1}/D_t$
is a growth-adjusted stochastic discount factor, and the cash flow is
$G_{t+1}=1$.

We say that a stochastic process $\{Y_t\}$ \emph{solves} \eqref{eq:fwl} if,
with probability one, each $Y_t$ is finite and \eqref{eq:fwl} holds for all $t
\geq 0$.  To obtain a solution we require some auxiliary conditions on the
state process, the cash flow and the stochastic discount process.  The first
is as follows:

\begin{assumption}\label{a:p}
    For all $t \geq 0$, we have $\PP\{\Phi_t > 0\}=1$, while $G_t \geq 0$
    and $G_t > 0$ with positive probability.
\end{assumption}

Positivity of $\Phi_t$ is equivalent to
assuming no arbitrage (\cite{hansen_richard:1987}, Lemma~2.3) and holds in all
applications we consider. Provided that the focus is on nonnegative cash flows
(e.g., dividends), the second condition is also innocuous, since $G_t = 0$ almost
surely implies $Y_t=0$ for all $t$.

To introduce the possibility of stationary Markov solutions, we assume that
$\{\Phi_t\}$ and $\{G_t\}$ admit the representations
\begin{equation}
    \label{eq:azdef}
    \Phi_{t+1} = \phi(X_t, X_{t+1}, \eta_{t+1})
    \quad \text{and} \quad
    G_{t+1} = g(X_t, X_{t+1}, \eta_{t+1})
\end{equation}
where $\{X_t\}$ is an underlying $\XX$-valued state process,
$\{\eta_t\}$ is a $\WW$-valued innovation sequence  and $\phi$ and $g$
are positive Borel measurable maps on $\XX \times \XX \times \WW$. The sets
$\XX$ and $\WW$ can be any
separable and completely metrizable topological spaces.
The representations in \eqref{eq:azdef} replicate the multiplicative
functional specifications in \cite{hansen2009long} and
\cite{hansen2012dynamic}.

The innovation process $\{\eta_t\}$ is assumed to be {\sc iid}
and independent of $\{X_t\}$, with common distribution $\nu$.  The state
process is assumed to be stationary and Markovian with common
marginal distribution $\pi$.  The conditional
distribution of $X_{t+1}$ given $X_t = x$ is denoted by $\Pi(x, \diff y)$.
We use $\Pi^n$ to represent $n$-step transition probabilities (see, e.g.,
\cite{meyn2009markov}).

\begin{assumption}
    \label{a:i}
    The state process $\{X_t\}$ is irreducible: for each Borel set $B \subset
    \XX$ with $\pi(B) > 0$ and each $x \in \XX$, there exists an $n \in \NN$
    such that $\Pi^n(x, B) > 0$.
\end{assumption}

Assumption~\ref{a:i} is a weak mixing condition on the state process that is
satisfied in all applications we consider.  For example,
\cite{mehra1985equity} use a discrete state space and require that the Markov
chain is both irreducible and aperiodic.   Similarly, in the long run risk model of
\cite{schorfheide2018identifying}, innovations have positive densities, which
implies irreducibility.

A measurable function $h$ from $\XX$ to $\RR$ is called a \emph{Markov
solution} to \eqref{eq:fwl} if
\begin{equation*}
    h(X_t) = \EE_t \left[
        \Phi_{t+1} (h(X_{t+1}) + G_{t+1} )
        \right]
\end{equation*}
for all $t \geq 0$, which means that $\{Y_t\} := \{ h(X_t) \}$ solves \eqref{eq:fwl}.
Conditioning on $X_t = x$, we see that $h$ will be a Markov solution if
it is a fixed point of the \emph{equilibrium
price operator} $T$ defined by
\begin{equation}
    \label{eq:deff}
    T h (x)
    = \EE
    \left[
        \Phi_{t+1}
        \left(h(X_{t+1}) + G_{t+1}\right) \given X_t = x
    \right].
\end{equation}
For each $p \geq 0$, we let $L_p(\XX, \RR, \pi)$ denote, as
usual, the set of Borel measurable real-valued functions $h$ defined on the
state space $\XX$ such that $\int |h(x)|^p \pi(\diff x)$ is finite.  Let
$\hH_p$ be all nonnegative functions in $L_p(\XX, \RR, \pi)$.
  This will be our candidate space, so if $p=2$, say,
then we seek solutions with finite second moment.\footnote{In what follows,
all notions of convergence refer to standard norm convergence in
$L_p$.  As usual, functions equal $\pi$-almost everywhere are
identified.  Appendix~\ref{s:ax} gives more details.}

%Note that $q \leq p$ implies $\hH_p \subset \hH_q$, so existence of a solution in
%$\hH_p$ implies existence of a solution in $\hH_q$ whenever $q \leq p$.

\begin{assumption}
    \label{a:co}
    There exists a $p \geq 1$ such that $\EE \, (\Phi_t G_t)^p < \infty$ and, in
    addition, the map $h \mapsto Vh$ defined by
    \begin{equation}
        \label{eq:aop}
        V h(x) =  \EE \left[ \Phi_{t+1} \, h(X_{t+1}) \given X_t = x \right]
    \end{equation}
    is eventually compact as a
    linear operator from $L_p(\XX, \RR, \pi)$ to itself.
\end{assumption}

In what follows, we call $V$ the \emph{valuation operator}.  The first part of
Assumption~\ref{a:co}, which requires that the cash flow process has $p$
finite moments after discounting, is weakest when $p=1$.  In fact, this
minimal restriction cannot be omitted, since \eqref{eq:fwl} is not well
defined without finiteness of first moments.\footnote{We might wish to choose
    $p$ to be larger when possible, in order to impose more structure on our
    solution (e.g., finiteness of second moments for asymptotic results
related to estimation).} The ``eventually compact'' part of
Assumption~\ref{a:co} is a regularity condition related to the notion of
compact linear operators, stated formally in
Appendix~\ref{s:ax}.  In Sections~\ref{s:aar}--\ref{s:aii} we discuss how to test this
condition and review its implications.\footnote{Analogous conditions can be
    found in the literature on eigenfunction decompositions of valuation
    operators.  See, for example, Assumption~2.1 in
\cite{christensen2017nonparametric}.}

\subsection{Existence and Uniqueness}

\label{ss:eu}

We introduce the \emph{$p$-th order stability exponent} of the SDF process
$\{\Phi_t\}$ as
\begin{equation}
    \label{eq:prpp}
    \lL_\Phi^p  :=
    \lim_{n \to \infty}
    \frac{1}{np} \ln \EE \left\{ \EE_x  \prod_{t=1}^n \, \Phi_t \right\}^p .
\end{equation}
Here and below, $\EE_x$ conditions on $X_0 = x$.
The simplest case is when $p=1$, since, by the Law of Iterated Expectations,
\begin{equation}
    \label{eq:prp}
    \lL_\Phi :=
    \lL_\Phi^1 =
    \lim_{n \to \infty}
    \frac{1}{n} \ln \left\{ \EE \prod_{t=1}^n \, \Phi_t \right\}.
\end{equation}
As discussed in the introduction, when $\{\Phi_t\}$ is the
discount factor process, $-\lL_\Phi$  can be interpreted as the yield on a
zero-coupon bond with very long maturity. In the case of a growth-adjusted
discount factor process from \eqref{eq:pd2}, $-\lL_\Phi$ is
the rate of decay of prices of long-maturity dividend strips.

%\textcolor{red}{[We would save some product signs if we defined $\Phi_{t,t+n}
%= \prod_{i=0}^{n-1}\Phi_{t+i}$ --- that's pretty conventional notation in
%asset pricing.]}

\begin{theorem}
    \label{t:bk}
    If Assumptions~\ref{a:p}--\ref{a:co} hold, then
    the limit in \eqref{eq:prpp} exists and
    all of the following statements are equivalent:
    \begin{enumerate}
        \item $\lL_\Phi^p < 0$.
        \item There exist $h_0, h$ in $\hH_p$ such that $T^n h_0 \to h$
            as $n \to \infty$.
        \item A Markov solution $h^*$ exists in $\hH_p$.
        \item A unique Markov solution $h^*$ exists in $\hH_p$ and
            $T^n h  \to h^*$  for every $h \in
            \hH_p$.
    \end{enumerate}
    If one and hence all of (a)--(d) are true,
    then $h^*$ satisfies
    \begin{equation}
        \label{eq:fsg}
        h^*(x) =
        \sum_{n=1}^\infty
            \EE_x
            \,
            \prod_{i=1}^{n} \Phi_i \, G_n
        \quad \text{for $\pi$-almost all $x$ in $\XX$}.
    \end{equation}
\end{theorem}

The $p$ in conditions (a)--(d) is from Assumption~\ref{a:co}.
    Condition (b) is valuable from an applied perspective, since it
        shows that if iteration with $T$ converges from some starting
        point, then the limit is necessarily a Markov solution, and, in fact
        is the only Markov solution in $\hH_p$.
    Part (d) shows that successive approximations is
            globally convergent whenever $\lL_\Phi^p < 0$.\footnote{Successive
        approximations can be compared with other methods for solving asset
        pricing models, such as perturbations and projections (see, e.g.,
        \cite{pohl2018higher}).  Successive approximations
        can be thought of either as a robust alternative or complementary in the
        following sense: While projection methods are fast, they almost always require
        tuning, as output can be sensitive to both the choice of basis functions and
        the solver used for the associated nonlinear equations.  In such cases, the
        globally convergent successive approximations method can be employed to first
        compute the solution.  Projection methods can then be tuned until they
        reliably reproduce it.}

\begin{remark}\label{r:csra}
    In Appendix~\ref{s:ax} we use a local spectral radius result to show that,
    when Assumptions~\ref{a:p}--\ref{a:co} hold, we have
    \begin{equation}
        \label{eq:rvi}
        \lL_\Phi^p = \ln r(V)
    \end{equation}
    where $r(V)$ is the spectral radius of the operator $V$ introduced
    in~\eqref{eq:aop}, when regarded as
    a linear self-map on $L_p(\XX, \RR, \pi)$.  This result is central to the
    proof of Theorem~\ref{t:bk} and used in later
    computations.\footnote{The definition of the spectral radius is given in
        Appendix~\ref{s:ax}.  The spectral
        radius $r(V)$ also appears in the literature on stochastic discount factor
        decompositions discussed in the introduction, since, by the
        Krein--Rutman theorem, it equals
        the principal eigenvalue of the valuation operator $V$, which in turn
        determines the permanent growth component of the stochastic discount
    factor.}
\end{remark}

\begin{remark}\label{r:nis}
    The necessity component of Theorem~\ref{t:bk} has strong implications.
    To understand it from an asset pricing perspective, we interpret $g$ as
    defining a cash flow via~\eqref{eq:azdef} and view $h$ satisfying $h = Th$
    as a pricing function for this cash flow.  Theorem~\ref{t:bk} tells
    us that, when $\lL_\Phi^p \geq 0$, \emph{no} nontrivial cash flow with
    finite $p$-th moment can be priced
    under the discounting embedded in $V$.
    In other words, the only cash flow with finite price in $\hH_p$ is the cash flow
    that pays zero with probability one.
\end{remark}

It is also worth noting that, in view of (a) and (d) from Theorem~\ref{t:bk},
uniqueness of asset prices is never an issue under our assumptions.  Either
$\lL_\Phi^p < 0$ and one solution exists, or $\lL_\Phi^p \geq 0$ and no
solutions exist.

\section{Applications with Finite State Spaces}\label{s:aar}

We now turn to applications of Theorem~\ref{t:bk}, beginning with the 
classic study of \cite{mehra1985equity}.  Our
objective in treating this model is to clarify the assumptions and results in
Section~\ref{s:mr} in a simple environment, before moving on to more complex applications.  The
following proposition will aid our analysis.

% \subsection{Mehra and Prescott}\label{ss:fsm}

%The setting of \cite{mehra1985equity} has a simple structure that helps us
%clarify our assumptions and results.  The state space is finite,
%and in this setting we can state the following:

\begin{proposition}
    \label{p:fc}
    If Assumptions~\ref{a:p}--\ref{a:i} hold and, in addition, the state space
    $\XX$ is finite, then Assumption~\ref{a:co} also holds, for every $p \geq
    1$, and $\lL_\Phi^p = \lL_\Phi$.  In particular (b)--(d) of
    Theorem~\ref{t:bk} all hold
    at every $p \geq 1$ if and only if $\lL_\Phi < 0$.
\end{proposition}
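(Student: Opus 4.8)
The plan is to handle the two assertions of the proposition in turn --- first that Assumption~\ref{a:co} is automatic when $\XX$ is finite, and then that the stability exponents $\lL_\Phi^p$ all collapse to $\lL_\Phi$ --- after which the ``in particular'' clause is immediate from Theorem~\ref{t:bk}.

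For the first assertion, the key observation is that for finite $\XX$ the space $L_p(\XX, \RR, \pi)$ is finite-dimensional for every $p \ge 1$. Hence the closed unit ball of $L_p(\XX,\RR,\pi)$ is compact, so every bounded linear self-map is compact, and in particular $V$ is eventually compact as soon as it is a well-defined self-map of $L_p$. Likewise $\hat g$, being a nonnegative real-valued function on a finite set, trivially has finite $p$-th moment and so lies in $\hH_p$. The only point needing a word is that $V$ really does carry $L_p$ into itself and that $\hat g$ is finite-valued; on a finite state space these reduce to finiteness of the $\nu$-integrals $\int \phi(x, y, \eta)\,\nu(\diff \eta)$ and $\int \phi(x,y,\eta)\,g(x,y,\eta)\,\nu(\diff \eta)$ over those $(x,y)$ with positive transition probability, which is a mild regularity condition implicit in the definitions of $V$ and $\hat g$ and which I take as part of the standing setup. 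With this, Assumption~\ref{a:co} holds at every $p \ge 1$.

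For the second assertion, Assumptions~\ref{a:p}--\ref{a:co} now all hold, so we may invoke \eqref{eq:rvi} to get $\lL_\Phi^p = \ln r(V)$, where $r(V)$ is the spectral radius of $V$ regarded as a linear self-map of $L_p(\XX, \RR, \pi)$. But on a finite state space $V$ is one and the same linear operator on one and the same vector space for every choice of $p$ --- only the norm on that space changes with $p$ --- and its spectral radius, being the largest modulus among its finitely many eigenvalues, is an intrinsic quantity independent of the norm. Hence $r(V)$ does not depend on $p$, and therefore $\lL_\Phi^p = \ln r(V) = \lL_\Phi^1 = \lL_\Phi$ for all $p \ge 1$. (Should one wish to avoid \eqref{eq:rvi}, the same conclusion follows elementarily: \eqref{eq:gsi} and the law of iterated expectations give $\lL_\Phi^p = \lim_n n^{-1} \ln \| V^n \1 \|_{L_p(\pi)}$, and, since all norms on the finite-dimensional space $L_p(\XX,\RR,\pi)$ are mutually equivalent with constants not depending on $n$, the sequences $n^{-1}\ln\|V^n\1\|_{L_p(\pi)}$ and $n^{-1}\ln\|V^n\1\|_{L_1(\pi)}$ share a common limit, whose existence is guaranteed by the now-applicable Theorem~\ref{t:bk}.)

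Finally, since Assumptions~\ref{a:p}--\ref{a:co} hold, Theorem~\ref{t:bk} applies at each $p \ge 1$ and makes (b)--(e) at level $p$ equivalent to $\lL_\Phi^p < 0$; combining this with $\lL_\Phi^p = \lL_\Phi$ shows (b)--(e) hold at every $p \ge 1$ exactly when $\lL_\Phi < 0$. I expect no genuine obstacle in any of this: the substantive content is simply that finite-dimensionality trivializes both the compactness and the moment requirements in Assumption~\ref{a:co}, and that the spectrum of a matrix does not see the choice of norm; the only care needed is in noting that the limit defining $\lL_\Phi^p$ exists (so that the norm-equivalence estimate yields equality of limits, not merely of $\limsup$s), which Theorem~\ref{t:bk} itself supplies once Assumption~\ref{a:co} is in place.
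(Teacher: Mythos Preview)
Your proof is correct and follows essentially the same approach as the paper's: finite-dimensionality of $L_p(\XX,\RR,\pi)$ trivializes both the compactness of $V$ (via Heine--Borel) and the moment condition on $\hat g$, and then \eqref{eq:rvi} together with norm-independence of the spectral radius of a matrix gives $\lL_\Phi^p = \lL_\Phi$. You are in fact slightly more careful than the paper in flagging the implicit finiteness of the $\nu$-integrals defining $V$ and $\hat g$, and your alternative route via $\|V^n\1\|_{L_p}$ and norm equivalence is a nice self-contained variant, but neither constitutes a genuinely different strategy.
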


Proposition~\ref{p:fc} applies whenever the state evolves as a finite,
irreducible Markov chain---a common set up in quantitative
applications.\footnote{See, for example, \cite{backus_gregory_zin:1989},
    \cite{weil:1989}, \cite{kocherlakota:1990}, \cite{alvarez_jermann:2001},
\cite{cogley_sargent:2008}, \cite{collindufresne_johannes_lochstoer:2016}, or
\cite{martin_ross:2019}.} The proposition has two key implications.  One is
that, in the finite state setting, we can always work with the simple exponent
$\lL_\Phi^1 = \lL_\Phi$ from~\eqref{eq:prp}.  The reason the stability
exponent does not depend on $p$ is that moment conditions are irrelevant when
$\XX$ is finite, since all moments are finite for random variables supported
on finite sets.

The second implication is that the eventual compactness
condition in Assumption~\ref{a:co} is always satisfied when $\XX$ is finite.
Indeed, eventual compactness means that
there exists a time horizon $n$ such that
the $n$-period valuation operator $V^n$ is a compact linear operator (i.e., maps bounded sets of payoffs to
relatively compact sets).  This generalizes the idea that $V^n$ has finite rank
(i.e., maps into a finite dimensional range space).  When $\XX$ is finite, $V$
is just a matrix and the range space of $V^n$ is finite dimensional for all
$n$, so
eventual compactness certainly holds.

In \cite{mehra1985equity}, the price-dividend ratio $Q_t := P_t / D_t$
obeys~\eqref{eq:pd2}
and the stochastic discount factor is given by~\eqref{eq:ms}.
In other words,
\begin{equation}\label{eq:qt}
        Q_t
        = \EE_t \left[
                \beta \frac{u'(C_{t+1})}{u'(C_t)}
                \frac{D_{t+1}}{D_t}
                    \left(Q_{t+1} + 1\right)
                \right].
\end{equation}
Agents have CRRA utility
\begin{equation}
    \label{eq:cx}
    u(c) = \frac{c^{1 - \gamma}}{1 - \gamma}.
\end{equation}
In equilibrium, $C_{t+1}/C_t = D_{t+1}/D_t = X_{t+1}$.   The state space $\XX$ is
contained in $(0, \infty)$, so $X_t > 0$.  Equation~\eqref{eq:qt} becomes
\begin{equation}
    \label{eq:q_finite_Markov_chain}
    Q_t = \beta \, \EE_t \, X_{t+1}^{1-\gamma} \, [ Q_{t+1} + 1 ] ,
\end{equation}
which is a version of~\eqref{eq:fwl} with $\Phi_{t+1}
= \beta X_{t+1}^{1-\gamma}$ and $G_{t+1}=1$.  Since elements of $\XX$ are positive,
Assumption~\ref{a:p} holds.  \cite{mehra1985equity} assume that $\Pi(x,y) :=
\PP\{X_{t+1}=y \,|\, X_t = x\} > 0$ for all $x,y \in \XX$, so the
irreducibility condition in Assumption~\ref{a:i} holds.  In view of
Proposition~\ref{p:fc}, Assumption~\ref{a:co} is also valid
and $\lL_\Phi^p = \lL_\Phi$ for all $p \geq 1$.

With Assumptions~\ref{a:p}--\ref{a:co} all in force, Remark~\ref{r:csra}
applies, and we have $\lL_\Phi = \ln r(V)$, where $r(V)$ is the spectral
radius of the  valuation operator
\begin{equation*}
    V h(x)
    =  \EE_x \Phi_{t+1} \, h(X_{t+1})
    =  \beta \sum_{y \in \XX} y^{1-\gamma} h(y) \Pi(x, y).
\end{equation*}
In the present setting, the operator $V$ can be identified with the
matrix $V(x, y) = \beta y^{1-\gamma} \Pi(x, y)$, and $r(V)$ is just the
spectral radius of this matrix.

The intuition behind Remark~\ref{r:csra} is straightforward for this model.
Recall that $\lL_\Phi$ is the long-term decay rate in prices.
We now show that $\ln r(V)$ also has this interpretation.
The largest eigenvalue of the valuation matrix, equal to the spectral radius $r(V)$, dominates long-term
pricing. As the maturity $n$ increases, prices of cash flows $h$ behave as
$V^nh \sim r(V)^n e$, where $e$ is the eigenvector associated with the
largest eigenvalue. The long-term decay rate in prices
is, therefore,\footnote{Relatedly, \cite{backus_gregory_zin:1989} show that the yield
    on the long bond equals $-\ln r(V)$ for a valuation matrix
$V$ corresponding to $\Phi_{t+1} = \beta X_{t+1}^{-\gamma}$ in
\eqref{eq:q_finite_Markov_chain}.}
\begin{equation}
    \lim_{n\to\infty} \frac{1}{n} \ln V^n h = \ln r(V).
\end{equation}

Turning to Theorem~\ref{t:bk}, since Assumptions~\ref{a:p}--\ref{a:co} all
hold, statements (b)--(d) of Theorem~\ref{t:bk} are valid if and only if $\lL_\Phi
= \ln r(V) < 0$.  In \cite{mehra1985equity}, the transition
probabilities and state space are given by
\begin{equation*}
    \Pi =
    \begin{pmatrix}
        \psi & 1-\psi \\
        1-\psi & \psi
    \end{pmatrix}
    \quad \text{and} \quad
    \XX =
    \begin{pmatrix}
        1 + \mu + \delta \\
        1 + \mu - \delta
    \end{pmatrix}
\end{equation*}
for parameters $\psi, \mu, \delta$. The baseline
parameter values are $\psi = 0.43$, $\mu=0.018$, $\delta=0.036$, and
$\beta=0.99$.  The authors experiment with different values of $\gamma$
ranging from 1 to 10.  With $\gamma = 2.5$, evaluating the spectral radius of
$V$ gives $\ln r(V) = -0.0348$, so the equivalent conditions (b)--(d) in
Theorem~\ref{t:bk} hold.

The fact that $\ln r(V) < 0$ implies existence of a unique Markov solution in this
simple setting is well known: a Markov solution is a finite vector $h$ satisfying $h
= V(h + \1)$, where $\1$ is the unit vector, and the Neumann Series Theorem
yields a solution whenever $r(V) < 1$.  Nonetheless, Theorem~\ref{t:bk} is
useful even in this very simple case.  For example, it shows that
$\ln r(V) < 0$ is not just sufficient but also necessary for existence of a
finite price-dividend ratio.

Figure~\ref{f:mp_contours} helps illustrate the value of this exact
delineation.  The plot
shows contour lines for the value of $\lL_\Phi$ generated by the
Mehra--Prescott model and a set of neighboring parameterizations.   The
horizontal and vertical axes show grid points for the parameters $\gamma$ and
$\delta$ respectively.  The other parameters are held at the baseline
parameterization.  The contour line $\lL_\Phi = 0$ is emphasized.
Pairs $(\gamma, \delta)$ on the contour line $\lL_\Phi=0$, as well as those to
the far left (low
values of $\gamma$) and top right (high $\gamma$ and $\delta$) are where the
stability condition fails.  At other points, such as at the original
parameterization used by \cite{mehra1985equity}, we have $\lL_\Phi < 0$ and the
stability condition holds.

\begin{figure}
    \centering
    \scalebox{0.55}{\includegraphics[clip=true, trim=0mm 0mm 0mm 5mm]{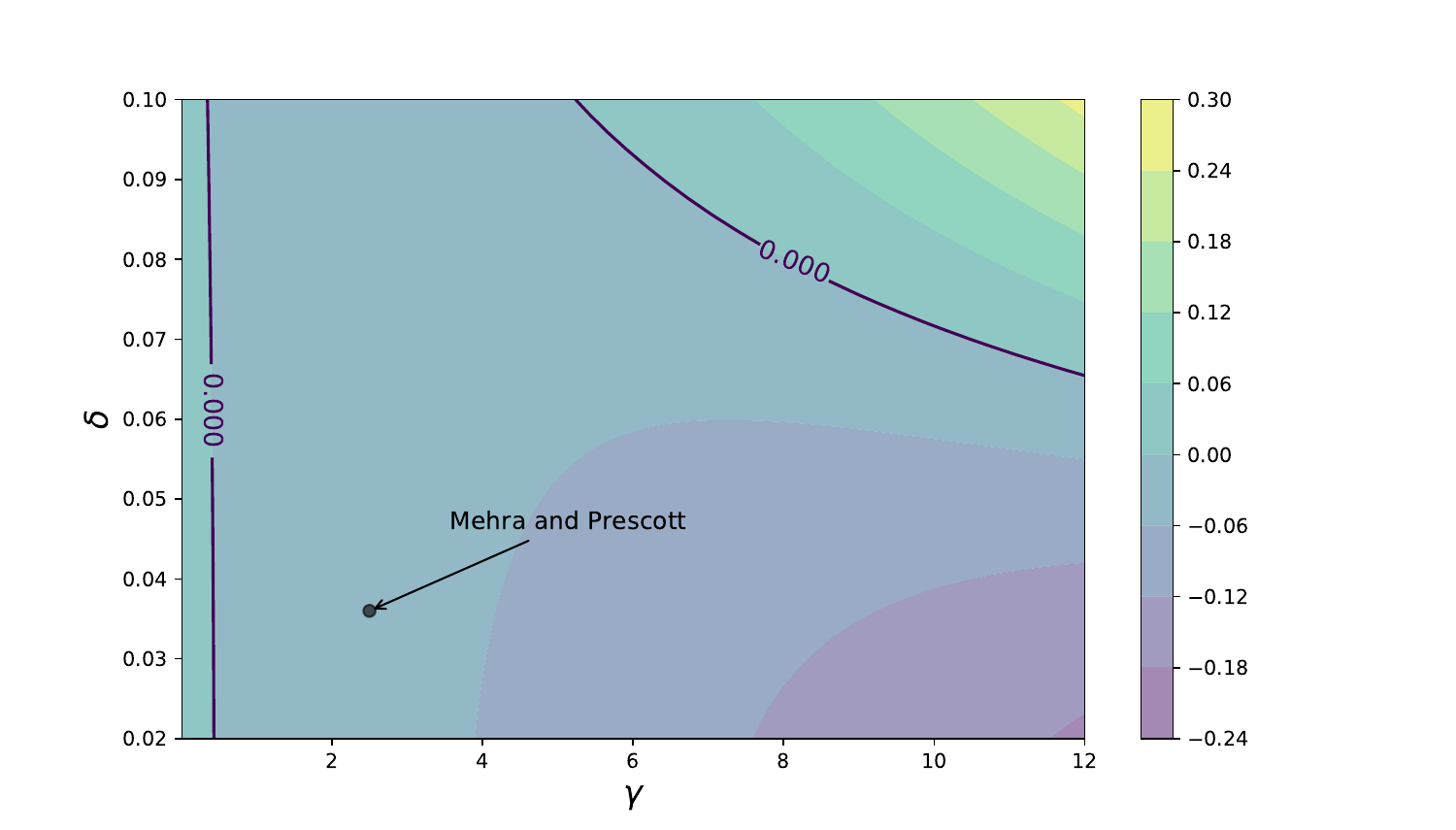}}
    \caption{\label{f:mp_contours} Contour plot of $\lL_\Phi$ for the Mehra-Prescott model}
\end{figure}

Recall that $\lL_\Phi$ can be interpreted as the (negative) rate of growth of
the price of long-horizon consumptions strips as maturity increases. The
nonmonotonicity of $\lL_\Phi$ is a consequence of the
changing strength of the substitution effect as the risk aversion parameter
$\gamma$ changes. An increase in the volatility parameter $\delta$ reduces the
certainty equivalent value of future consumption, acting as a negative income
effect. When $\gamma>1$, the low willingness to substitute consumption across
time dominates valuation. The value of future consumption in current
consumption units, and hence $\lL_\Phi$, increases as $\delta$ increases, and diverges to infinity in
the moment when the condition $\lL_\Phi<0$ fails. On the other hand, when
$\gamma<1$, the income effect dominates, and $\lL_\Phi$ becomes more negative
as $\delta$ increases. A solution does not exist for low values of $\gamma$
and $\delta$ because $\beta(1+\mu)>1$. In the case of logarithmic utility
($\gamma=1$), the income and substitution effect exactly offset each other,
and $\lL_\Phi = \log \beta <0$.

\section{Applications with Log Linear Growth}

\label{ss:crra}

Next we study two asset pricing applications where dynamics are linear and shocks
are Gaussian.  The main objective of this section is to show how, in some
settings, the value $\lL_\Phi^p$ can be calculated analytically, as well as to
derive intuition on the factors that determine $\lL_\Phi^p$. We will also use
the analytical solutions as a benchmark for testing numerical
calculations (see Appendix~\ref{s:ctv}).

\subsection{CRRA Preferences and Log Linear Growth}

As a first step, we replace the finite state process in \cite{mehra1985equity}
with a linear Gaussian process.   In particular, dividend and consumption
growth are now assumed to obey the constant volatility specification from section~I.A of
\cite{bansal2004risks}, which is
\begin{subequations}
\begin{align}
    \ln \left( D_{t+1}/D_{t}\right)
        & = \mu_d + \varphi X_t + \sigma_d \, \xi_{t+1}
    \label{al:sar000}
    \\
    \ln \left( C_{t+1}/C_{t}\right)
        & = \mu_c + X_t + \sigma_c \, \epsilon_{t+1}
    \label{al:sar00}
        \\
    X_{t+1}
        &=\rho X_t + \sigma \, \eta _{t+1}
    \label{al:sar0}
\end{align}
\end{subequations}
Here $-1 < \rho < 1$ and $\left\{ (\xi_t, \epsilon_t,  \eta_t) \right\} $ is
{\sc iid} and standard normal in $\RR^3$.  This is a natural extension of the
original model to a continuous state setting. 

The model is otherwise unchanged.
The SDF has the standard time separable form \eqref{eq:ms} and agents have
CRRA utility.  
We solve for the price dividend ratio using \eqref{eq:pd2}, which
means that, when connecting to the forward looking model~\eqref{eq:fwl}, we
take $G_t = 1$ and
\begin{equation}
    \label{eq:phimp}
    \Phi_{t+1}
    = M_{t+1} \frac{D_{t+1}}{D_t}
        = \beta \, \exp \left\{
            (\mu_d + \varphi X_t + \sigma_d \, \xi_{t+1})
            -\gamma
            (\mu_c + X_t + \sigma_c \, \epsilon_{t+1})
        \right\}.
\end{equation}
(See the discussion following \eqref{eq:fwl}.)

\begin{proposition}\label{p:lpc}
    For $\{\Phi_t\}$ in~\eqref{eq:phimp} and $p \geq 1$, we have
    \begin{equation}
        \label{eq:lpc}
        \lL_\Phi^p
        = \ln \beta
            +
            \mu_d - \gamma \mu_c
            + \frac{\sigma^2}{2} \frac{(\varphi-\gamma)^2}{(1-\rho)^2}
            + \frac{\sigma_d^2 + (\gamma \sigma_c)^2}{2}.
    \end{equation}
\end{proposition}

The proof of Proposition~\ref{p:lpc} starts from the definition
in~\eqref{eq:prpp} and steps through straightforward calculations.  The proof
is in Appendix~\ref{ss:lpc}.

The value of $\lL_\Phi^p$ represents the long-run growth rate of the
discounted dividend $\Phi_t$. In expression (\ref{eq:lpc}), the term $\mu_d
+\sigma_d^2/2 + [\varphi\sigma/(1-\rho)]^2/2$ corresponds to the long-run
dividend growth rate, $\ln\beta-\gamma\mu_c + (\gamma\sigma_c)^2/2 +
[\gamma\sigma/(1-\rho)]^2/2$ to the (negative of) the long-run discount rate,
and $\varphi\gamma\sigma^2/(1-\rho)^2$ is the long-run covariance between the
two.\footnote{Notice that $\lL_\Phi^p$ in \eqref{eq:lpc} does not depend on $p$.  In
particular, we have $\lL_\Phi^p = \lL_\Phi^1 := \lL_\Phi$ for all $p$.
This matches the finding that $\lL_\Phi^p = \lL_\Phi$ for all $p$ in the
finite dimensional case, as shown in Proposition~\ref{p:fc}.  In other words, this
constant volatility model is simple enough to retain key features of
the finite dimensional setting.}

When do the conditions of Theorem~\ref{t:bk} hold?
Since $G_t = 1$ and $\Phi_t$ is given by \eqref{eq:phimp},
Assumption~\ref{a:p} is clearly valid.  The state process \eqref{al:sar0} is
irreducible, so Assumption~\ref{a:i} holds.
For the moment condition $\EE (\Phi_t G_t)^p < \infty$ in
Assumption~\ref{a:co}, a finite $p$-th moment
in~\eqref{eq:phimp} is required.  This holds for all $p \geq 1$ in the current setting
because the stationary distribution of $X_t$ is
Gaussian.

The only remaining issue is the eventual compactness of $V$ in Assumption~\ref{a:co}.
In view of \eqref{eq:abel}, the valuation operator
$V$ has the form
\begin{equation*}
    Vh(x) = \beta \, \exp \left\{a x + b \right\} \int h(y) q(x, y) \diff y
\end{equation*}
for suitably chosen constants $a$ and $b$,
where $q$ is the Gaussian transition density associated with \eqref{al:sar0}.
From this expression it can be verified that Assumption~\ref{a:co} holds at
$p=2$ via Proposition~\ref{p:eco} in the appendix.  This follows from the
smoothing property of conditional expectations, which implies that the range
of the operator $V$ is not too irregular.

We conclude that, since Assumptions~\ref{a:p}--\ref{a:co} are all satisfied,
the conclusions of Theorem~\ref{t:bk} hold if and only if
the right hand side of \eqref{eq:lpc} is negative.

\subsection{Habit Persistence}

\label{ss:ahpm}

There is a large literature on asset prices in the presence of consumption
externalities and habit formation (see, e.g., \cite{abel1990asset} and
\cite{campbell1999force}). In this section we treat a relatively simple habit
formation model and show how the stability exponent can be calculated
analytically.  In the process, we illustrate the value of Theorem~\ref{t:bk}
by substantially improving on the existence and uniqueness results in
\cite{calin2005solving}.

In the ``external'' habit formation setting of \cite{abel1990asset} and \cite{calin2005solving},
the growth adjusted SDF takes the form
\begin{equation}
    \label{eq:abel}
    M_{t+1} \frac{D_{t+1}}{D_t}
        = k_0 \exp((1 - \gamma)(\rho - \alpha) X_t)
\end{equation}
where $k_0 := \beta \exp(b(1 - \gamma) + \sigma^2(\gamma - 1)^2/2)$ and
$\alpha$ is a preference parameter.    The
state sequence $\{X_t\}$ obeys
\begin{equation}
    \label{eq:sar0}
    X_{t+1} = \rho X_t + b + \sigma \, \eta _{t+1}
    \quad \text{with} \quad
    -1 < \rho < 1
    \quad \text{and} \quad \{\eta_t\} \iidsim N(0, 1).
\end{equation}
The parameter $b$ is equal to $x_0 + \sigma^2 (1-\gamma)$ where $x_0$
represents mean constant growth rate of the dividend of the asset.

The price-dividend ratio associated with this stochastic discount factor
satisfies the forward recursion \eqref{eq:pd2} and,
by Theorem~\ref{t:bk}, there exists a unique price-dividend with finite second
moment (we set $p=2$ in Theorem~\ref{t:bk}) if $\lL_\Phi^2 < 0$ and
Assumptions~\ref{a:p}--\ref{a:co} are satisfied.
Assumptions~\ref{a:p}--\ref{a:co} can be verified when $p=2$ in almost
identical manner to the corresponding discussion in Section~\ref{ss:crra}.
Hence, by Theorem~\ref{t:bk}, a unique equilibrium price-dividend ratio with
finite second moment exists if and only if
$\lL_\Phi^2 < 0$.

%%%%%%% KEEP THESE COMMENTS BECAUSE THEY GIVE THE DERIVATION DETAILS %%%

%To investigate when $\lL_\Phi^2 < 0$ holds, observe that, by \eqref{eq:abel},
%we have
%%
%\begin{equation}
    %\prod_{i=1}^n \Phi_i
    %= k_0^n \exp \left\{
        %(1 - \gamma)(\rho - \alpha) \sum_{i=1}^n X_i
    %\right\}
%\end{equation}
%%
%Let $\mu_n(x)$ and $s_n$ be the mean and variance of $\sum_{i=1}^n X_i$
%respectively, conditional on $X_0 = x$.
%We then have
%%
%\begin{equation}
    %\EE  \left( \EE_x \prod_{i=1}^n \Phi_i \right)^2
    %= k_0^{2n}
        %\int \exp \left\{
            %2 c_n x + 2 d_n
        %\right\}
        %\pi(x) \diff x,
%\end{equation}
%%
%where $c_n := (1 - \gamma)(\rho - \alpha) \rho (1-\rho^n) / (1-\rho)$ and
%%
%\begin{equation*}
    %d_n := (1 - \gamma)(\rho - \alpha) b \sum_{i=1}^n \frac{1-\rho^i}{1-\rho}
    %+ (1 - \gamma)^2(\rho - \alpha)^2 \frac{s_n^2}{2}.
%\end{equation*}
%%
%In view of \eqref{eq:sar0}, the stationary
%distribution is $\pi = N(b / (1-\rho), \sigma^2 / (1-\rho^2))$, so
%%
%\begin{equation*}
    %\lL_\Phi^2
    %= \ln k_0
        %+
        %\lim_{n \to \infty}
        %\left\{
            %\frac{c_n}{n} \frac{b}{1-\rho}
            %+ \frac{d_n}{n}
            %+ \frac{c_n^2}{n} \frac{\sigma^2}{(1-\rho^2)}
        %\right\}
%\end{equation*}
%%
%As $c_n$ converges to a constant, the limit term reduces to $\lim_{n \to
    %\infty} d_n / n$.  The asymptotic variance $s_n^2$ is the same as
    %\eqref{eq:av}, so

An analytical expression for $\lL_\Phi^2$ can be obtained using similar
techniques to those employed in Section~\ref{ss:crra}.  Stepping through
the algebra shows that
\begin{equation}
    \label{eq:sea}
    \lL_\Phi^2
    = \ln k_0
        +
            (1 - \gamma)(\rho - \alpha)
            \frac{b}{1-\rho} + \frac{(1 - \gamma)^2(\rho - \alpha)^2}{2} \frac{\sigma^2}{(1-\rho)^2}
            .
\end{equation}
A unique equilibrium price-dividend ratio exists in $\hH_2$ if and only this
term is negative.  The intuition behind the expression \eqref{eq:sea} is
analogous to \eqref{eq:lpc} in Section~\ref{ss:crra}.

To give some basis for comparison, let us contrast the condition $\lL_\Phi^2 <
0$ with the sufficient condition for existence and uniqueness of an
equilibrium price-dividend ratio found in Proposition~1 of
\cite{calin2005solving}, which implies a one step contraction.  Their test is of
the form $\tau < 1$, where $\tau$ depends on the parameters of the model (see
Equation (7) of \cite{calin2005solving} for details).  Since the condition
$\lL_\Phi^2 < 0$ requires only eventual contraction, rather than one step
contraction, we can expect it to be
significantly weaker than the condition of \cite{calin2005solving}.

Figure~\ref{f:abel1} supports this conjecture.
The left sub-figure shows $\ln \tau$ at a range of parameterizations.  The
right sub-figure shows $\lL_\Phi^2$ at the same parameters, evaluated using
\eqref{eq:sea}.  The horizontal and vertical axes show grid points for the
parameters $\beta$ and $\sigma$ respectively.  For both sub-figures, $(\beta,
\sigma)$ pairs with
test values strictly less than zero (points to the south west of the 0.0
contour line) are where the respective condition holds.  Points to the north
west of this contour line are where it fails.\footnote{The parameters held
    fixed in Figure~\ref{f:abel1} are $\rho=-0.14$, $\gamma=2.5$, $x_0 = 0.05$
and $\alpha=1$.}

Inspection of the figure shows that the sufficient condition in
\cite{calin2005solving} fails for some empirically relevant parameterizations
that have unique stationary Markov equilibria.  Note also that, because
$\lL_\Phi^2 < 0$ is both necessary and sufficient in our setting, the 0.0
contour line in the right sub-figure is an exact delineation between stable
and unstable parameterizations.

\begin{figure}
    \centering
    \scalebox{0.55}{\includegraphics[clip=true, trim=0mm 0mm 0mm 5mm]{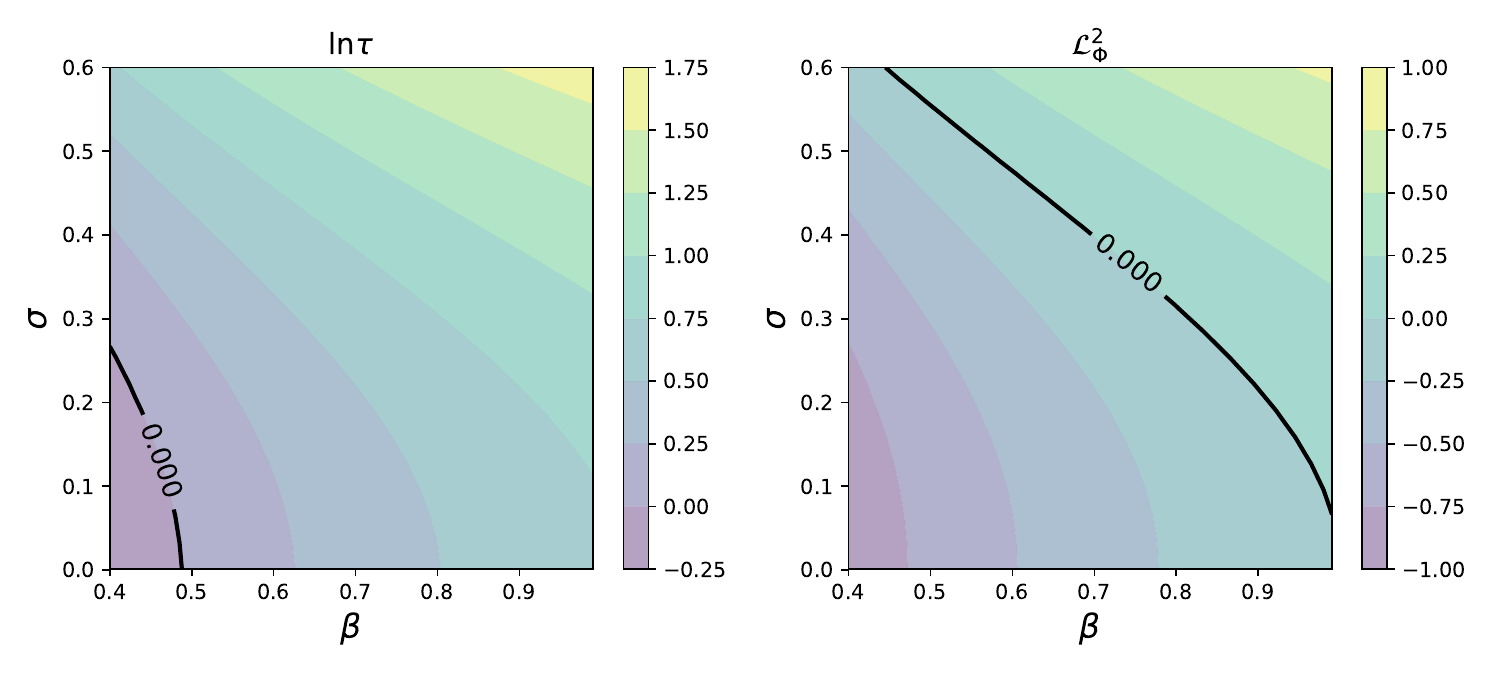}}
    \caption{\label{f:abel1} Alternative tests of stability for the habit
    formation model}
\end{figure}

%\subsection{Failure of One Step Contraction Conditions}

%\textcolor{red}{Explain why we have to shift to a numerical test for serious quantitative
    %applications.  I had some earlier figures showing how a one step
    %contraction based test would be too strict in the SSY model.  We could put
%them in this section.}

\section{Applications with Long-Run Risk}\label{s:aii}

In this section we continue to apply Theorem~\ref{t:bk}, but now in the
presence of more sophisticated models, with time-varying risk.  Our aim is to
show how Theorem~\ref{t:bk} can clearly delineate between well-defined models
and models with no solution, even in settings with features such as recursive
preferences and nonlinear dynamics (e.g., stochastic volatility).
We focus on the class of long-run risk models first developed by
\cite{bansal2004risks}, which have generated insights in a range of quantitative
applications.

\subsection{Long-Run Risk With Stochastic Volatility}

\label{ss:lri}

Next we turn to an asset pricing model with Epstein--Zin utility and
stochastic volatility in cash flow and consumption estimated by
\cite{bansal2004risks}.  Preferences are represented by the continuation value
recursion
\begin{equation}
	\label{eq:agg}
	V_t = \left[
            (1 - \beta) C_t^{1-1/\psi}
            + \beta \left\{ \rR_t \left(V_{t+1}
            \right) \right\}^{1-1/\psi}
          \right]^{1/(1-1/\psi)},
\end{equation}
where $\{C_t\}$ is the consumption path and $\rR_t$ is the certainty
equivalent operator
\begin{equation}
	\label{eq:ce}
	\rR_t(Y)
    := ( \EE_t  Y^{1-\gamma} )^{1/(1-\gamma)}.
\end{equation}
The parameter $\beta \in (0, 1)$ is a time discount factor, $\gamma$
governs risk aversion and $\psi$ is the intertemporal elasticity of
substitution.  Dividends and consumption grow according to
\begin{subequations}
    \label{eq:byd}
    \begin{align}
        \ln (C_{t+1} /  C_t)
        & = \mu_c + z_t + \sigma_t \, \eta_{c, t+1},
        \\
        \ln (D_{t+1} /  D_t)
        & = \mu_d + \alpha z_t + \varphi_d \, \sigma_t \, \eta_{d, t+1},
        \\
        z_{t+1}
        & = \rho z_t + \varphi_z \, \sigma_t \, \eta_{z, t+1},
        \\
        \sigma^2_{t+1}
        & = \max
                \left\{
                    v \, \sigma_t^2 + d + \varphi_\sigma \, \eta_{\sigma, t+1},\, 0
                \right\}.
    \end{align}
\end{subequations}
Here $\{\eta_{i, t}\}$ are {\sc iid} and standard normal for $i \in \{d, c, z,
\sigma\}$.  The state $X_t$ can be represented as $X_t =
(z_t, \sigma_t)$.  The (growth adjusted) SDF process associated with this model is
\begin{equation}
    \label{eq:phiez}
    \Phi_{t+1}
    := M_{t+1} \frac{D_{t+1}}{D_t}
    = \beta^\theta
        \frac{D_{t+1}}{D_t}
        \left(
            \frac{C_{t+1}}{C_t}
        \right)^{-\gamma}
        \left(
            \frac{W_{t+1}}{W_t - 1}
        \right)^{\theta - 1},
\end{equation}
where $W_t$ is the aggregate wealth-consumption ratio
and $\theta := (1-\gamma)/(1-1/\psi)$.\footnote{For a derivation see, for example,
\cite{bansal2004risks}, p.~1503.}
To obtain the aggregate wealth-consumption ratio $\{W_t\}$ we exploit the fact that
$W_t = w(X_t)$ where $w$ solves the Euler equation
\begin{equation*}
    \beta^\theta \,
    \EE_t
    \left[
        \left(
            \frac{C_{t+1}}{C_t}
        \right)^{1-\gamma}
        \left(
            \frac{w(X_{t+1})}{w(X_t) - 1}
        \right)^\theta
    \right]
    = 1.
\end{equation*}
Rearranging and using the expression for consumption growth given above, this
equality can be expressed as
\begin{equation*}
    \label{eq:sfw}
    w(z, \sigma) = 1 +  [ Kw^\theta (z, \sigma) ]^{1/\theta},
\end{equation*}
where $K$ is the operator
\begin{equation}
    \label{eq:defk}
    K g(z, \sigma)
    = \beta^\theta
        \exp
        \left\{
        (1-\gamma) (\mu_c + z)
        + \frac{(1-\gamma)^2\sigma^2}{2}
        \right\}
        \Pi g(z, \sigma)
\end{equation}
In this expression, $\Pi g(z, \sigma)$ is the expectation of $g(z_{t+1},
\sigma_{t+1})$ given the state's law of motion, conditional on $(z_t, \sigma_t) =
(z, \sigma)$.

The existence of a unique solution $w = w^*$ to \eqref{eq:sfw} in $\hH_1$
under the parameterization used in
\cite{bansal2004risks} is established in \cite{borovivcka2020necessary}
when
the innovation terms $\{\eta_{i, t}\}$ are truncated, so that the state space
is a compact subset of $\RR^2$.  In what follows, we compute $w^*$ using the
iterative method described in \cite{borovivcka2020necessary}
and recover $W_t$ as
$w^*(X_t)$ for each $t$.

As discussed in detail in Appendix~\ref{s:ctv}, to approximate the stability
exponent $\lL_\Phi$, we
can use Monte Carlo, generating independent paths for the SDF process
$\{\Phi_t\}$ and averaging over them to estimate the expectation on the right
hand side of \eqref{eq:prp}.  In computing the product $\prod_{t=1}^n \Phi_t$
we used \eqref{eq:byd} and \eqref{eq:phiez} to express it as
\begin{multline}
    \prod_{t=1}^n \Phi_t
    =
    (\beta^\theta \exp(\mu_d - \gamma \mu_c))^n \label{eq:Phi_BY}
    \\
    \times
    \exp
    \left(
        (\alpha - \gamma) \sum_{t=1}^n z_t
        -
        \gamma \sum_{t=1}^n \sigma_t \eta_{c,t+1}
        +
        \varphi_d \sum_{t=1}^n \sigma_t \eta_{d,t+1}
        +
        (\theta - 1) \sum_{t=1}^n \hat w_t
    \right),
\end{multline}
where $\hat w_{t+1} = \ln[ W_{t+1}/(W_t - 1)]$.

At the parameter values using in \cite{bansal2004risks} and based on the Monte
Carlo method discussed above, we estimate that
$\lL_\Phi = -0.00388$ implying the existence of a unique equilibrium price-dividend ratio
  function in $\hH_1$.\footnote{The reported value is the mean of 1,000
      draws of $\lL_\Phi(n, m)$, where
      the latter is defined in \eqref{eq:mc} of Appendix~\ref{s:ctv}.
      For each draw, $n$ and $m$ in
      in this calculation were set to 1,000 and 10,000 respectively.  The
      standard error for the mean was approximately 0.0001.
      Following \cite{bansal2004risks}, the
      parameters used were $\gamma=10.0$, $\beta=0.998$, $\psi=1.5$ $\mu_c=0.0015$,
      $\rho=0.979$, $\varphi_z = 0.044$, $v=0.987$, $d=$7.9092e-7,
      $\varphi_\sigma=$2.3e-6.  $\mu_d=0.0015$, $\alpha=3.0$ and $\varphi_d=4.5$.
  See table IV on page~1489.} While this value is close to zero, we find that
  significant shifts in parameters are
  required to cross the contour $\lL_\Phi = 0$.

  For example,
  Figure~\ref{f:alpha_mud_by} shows $\lL_\Phi$ calculated at a
  range of parameter values in the neighborhood of the \cite{bansal2004risks}
  specification via a contour map.  The parameter $\alpha$ is varied on the
  horizontal axis, while $\mu_d$ is on the vertical axis.  Other parameters
  are held fixed at the \cite{bansal2004risks} values.  The black contour line
  shows the boundary between stability and instability.  Not surprisingly, the
  test value increases with the cash flow growth rate $\mu_d$. In this region
  of the parameter space, it also declines with $\alpha$, because an increase
  in $\alpha$ with $\gamma > \alpha$ reduces the covariance between cash flow
  growth and discounting captured by the term $(\alpha - \gamma) \sum_{t=1}^n
  z_t$ in (\ref{eq:Phi_BY}).

\begin{figure}
    \centering
    \scalebox{0.55}{\includegraphics[clip=true, trim=0mm 0mm 0mm 0mm]{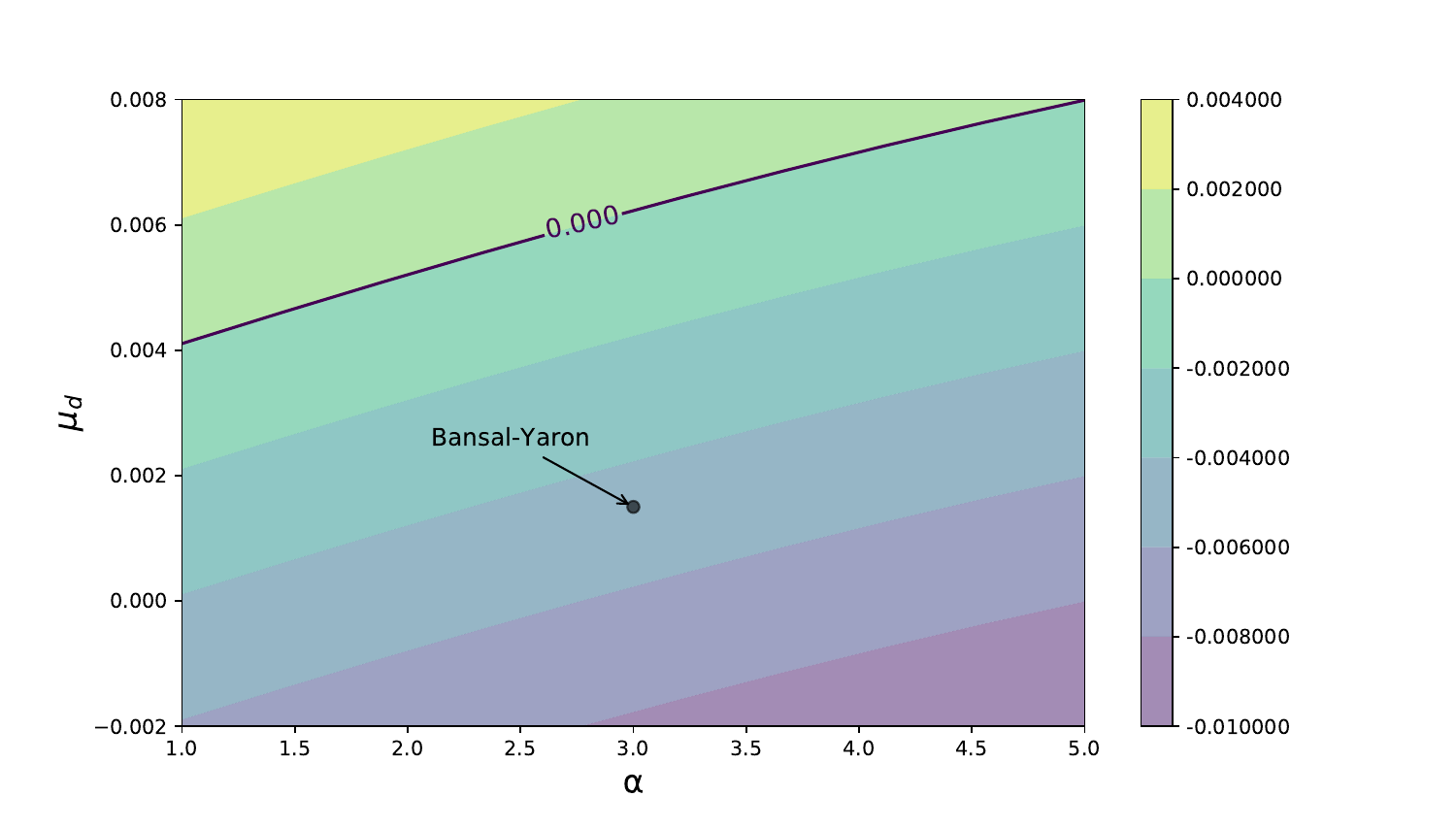}}
    \vspace{0.1em}
    \caption{\label{f:alpha_mud_by} The exponent $\lL_\Phi$ for the Bansal--Yaron model}
\end{figure}

\subsection{Long-Run Risk Part II}

\label{ss:lrii}

Now we repeat the analysis in Section~\ref{ss:lri} but using instead
the dynamics for consumption and dividends in
\cite{schorfheide2018identifying}, which are given by
\begin{align*}
    \ln (C_{t+1} /  C_t)
    & = \mu_c + z_t + \sigma_{c, t} \, \eta_{c, t+1},
    \\
    \ln (D_{t+1} /  D_t)
    & = \mu_d + \alpha z_t + \delta \sigma_{c, t} \, \eta_{c, t+1} + \sigma_{d, t} \, \eta_{d, t+1},
    \\
    z_{t+1}
    & = \rho \, z_t + (1 - \rho^2)^{1/2} \, \sigma_{z, t} \, \upsilon_{t+1},
    \\
    \sigma_{i, t}
    & = \phi_i \, \bar{\sigma} \exp(h_{i, t}),
    \\
    h_{i, t+1}
    & = \rho_{h_i} h_i + \sigma_{h_i} \xi_{i, t+1},
    \quad i \in \{z, c, d\}.
\end{align*}
The innovation vectors $\eta_t = (\eta_{c, t}, \eta_{d, t})$ and $\xi_t := (\upsilon_t, \xi_{z, t}, \xi_{c, t}, \xi_{d, t})$
are {\sc iid} over time, mutually independent and standard normal in $\RR^2$
and $\RR^4$ respectively.  The state can be represented as
$X_t := (z_t, h_{z, t}, h_{c, t}, h_{d, t})$.  Otherwise
the analysis and methodology radius is similar to Section~\ref{ss:lri}.  The
product of the growth adjusted stochastic discount factors over $n$ period from
$t=1$ is
\begin{multline*}
    \prod_{t=1}^n \Phi_t
    =
    (\beta^\theta \exp(\mu_d - \gamma \mu_c))^n
    \\
    \exp
    \left(
        (\alpha - \gamma) \sum_{t=1}^n z_t
        +
        (\delta - \gamma) \sum_{t=1}^n \sigma_{c,t} \eta_{c,t+1}
        +
        \sum_{t=1}^n \sigma_{d,t} \eta_{d,t+1}
        +
        (\theta - 1) \sum_{t=1}^n \hat w_t
    \right)
\end{multline*}
As in Section~\ref{ss:lri}, we generate this product many times and then
average to obtain an approximation of $\lL_\Phi$.
At the parameterization used in \cite{schorfheide2018identifying},
this evaluates to $-0.001$, indicating the existence of a unique
equilibrium price dividend ratio.\footnote{We used the posterior mean values from
    \cite{schorfheide2018identifying}, setting
                 $\beta=0.999$, $\gamma=8.89$, $\psi=1.97$,
                 $\mu_c=0.0016$,
                 $\rho=0.987$,
                 $\phi_z=0.215$,
                 $\bar \sigma=0.0032$,
                 $\phi_c=1.0$,
                 $\rho_{hz}=0.992$,
                 $\sigma_{hz}=\sqrt{0.0039}$,
                 $\rho_{hc}=0.991$,
                 $\sigma_{hc}=\sqrt{0.0096}$,
                 $\mu_d=0.001$,
                 $\alpha=3.65$,
                 $\delta=1.47$,
                 $\phi_d=4.54$,
                 $\rho_{hd}=0.969$, and
             $\sigma_{hd}=\sqrt{0.0447}$.
         We set $n=1,000$ and $m=10,000$, and then drew 1,000 observations of
     the statistic $\lL_\Phi(n, m)$, as defined in \eqref{eq:mc} of
 Appendix~\ref{s:ctv}.  The mean of these 1,000 draws was $-0.00103$, with
 standard error $0.000008$.}

Figure~\ref{f:mud_phid_ssy} shows the stability exponent $\lL_\Phi$ calculated
at a range of parameter values in the neighborhood of the
\cite{schorfheide2018identifying} specification.  The parameter $\phi_d$ is
varied on the horizontal axis, while $\mu_d$ is on the vertical axis.  Other
parameters are held fixed at the \cite{schorfheide2018identifying} values.
The interpretation is analogous to that of Figure~\ref{f:alpha_mud_by} from
Section~\ref{ss:lri}, as is the method of computation, with the dark contour
line shows the exact boundary between stability and instability. Increases in
both $\mu_d$ and $\phi_d$ increase the long-run growth rate of the level of
the discounted cash flow, and hence increase $\lL_\Phi$.  As with
Figure~\ref{f:alpha_mud_by}, significant deviations in estimated parameter
values are required to change the sign of $\lL_\Phi$.

\begin{figure}
    \centering
    \scalebox{0.55}{\includegraphics[clip=true, trim=0mm 0mm 0mm 0mm]{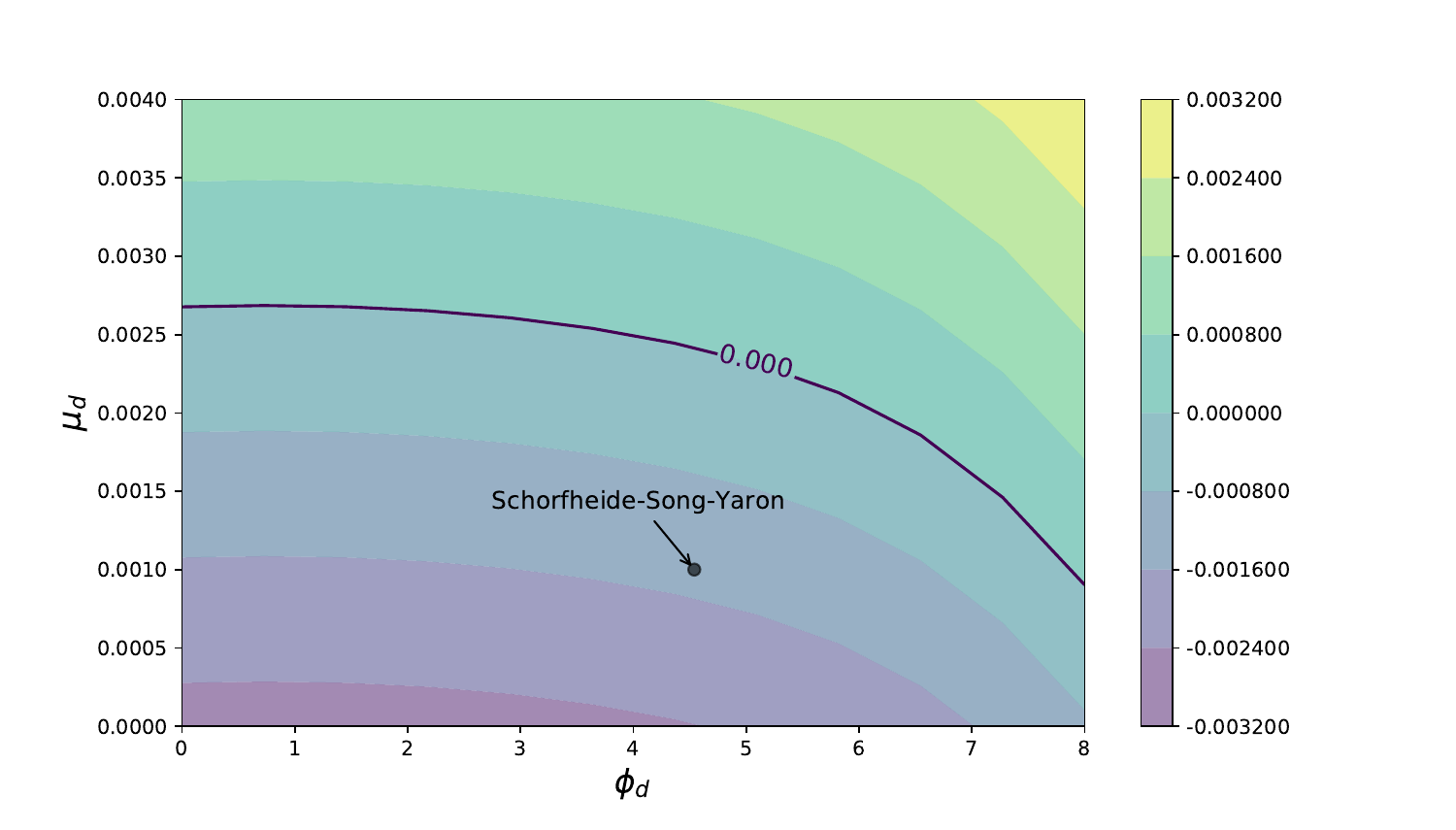}}
    \vspace{0.1em}
    \caption{\label{f:mud_phid_ssy} The exponent $\lL_\Phi$ for the Schorfheide--Song--Yaron model}
\end{figure}

\section{Conclusion}

\label{s:c}

In this paper we developed a practical test for existence and uniqueness of
equilibrium asset prices in infinite horizon arbitrage free settings.  By
seeking restrictions that ensure contraction occurs ``on average,
eventually,'' we obtained a test that is necessary as well as sufficient, and
hence yields an exact delineation between stable and unstable models.
Computational techniques are provided to ensure that the test can be
implemented in realistic quantitative applications.

It is natural to ask whether or not our results extend to a continuous time
setting.  We have provided an online appendix which shows that, at least in
simple cases, the answer is affirmative.  However, we treat no substantial
applications in that note and only briefly touch on interesting connections
between infinitesimal descriptions and stability results.  We hope that at
least some readers will pursue this research further.

Although we focused on consumption-based asset pricing models,
the theoretical results apply in the same way to other no-arbitrage
settings where asset prices can be represented using recursion (\ref{eq:pd})
with a positive marginal rate of substitution. Embedding this analysis into
frameworks with endogenously determined consumption is left to future
research.

\appendix

\section{Models with Stationary Dividends}\label{s:aiii}

In this section we discuss asset pricing with stationary dividends, rather than
stationary dividend growth (which is a more standard assumption in
quantitative analysis).  This topic is mainly of theoretical and historical
interest.  One aim is to extend the classical result on
existence and uniqueness of equilibrium asset prices obtained using
contraction mapping arguments in \cite{lucas1978asset}.

In that study, the price process obeys
\eqref{eq:pd}, where $P_t$ is the price of a claim to the aggregate endowment
stream $\{D_t\}$, and the stochastic discount factor is as given in
\eqref{eq:ms}.  In equilibrium, $C_t$ is equal to an endowment $D_t$, which is
a positive and continuous function of a stationary Markov
process $\{X_t\}$.  Following \cite{lucas1978asset} we  divide the fundamental asset
pricing equation \eqref{eq:pd} through by $u'(C_t)$ and set $D_t = C_t$
for all $t$, obtaining
\begin{equation}
    \label{eq:yl}
    Y_t
    = \beta \, \EE_t [ Y_{t+1} + u'(C_{t+1}) C_{t+1} ]
\end{equation}
where $Y_t := P_t \, u'(C_t)$.  This is a version of \eqref{eq:fwl} with
$\Phi_t = \beta$ and $G_t =u'(C_t) C_t$.

\cite{lucas1978asset} requires that the utility function $u$ is bounded, in
order to employ a contraction mapping theorem in a space of bounded functions.
This assumption
is violated in almost all quantitative applications.
To address this issue, \cite{brogueira2017existence} take
    $\XX = \RR$, set utility to be CRRA as in \eqref{eq:cx}, and suppose that
    $C_t = D_t = c(X_t)$ where $c(x) := a \exp(x)$ for some $a > 0$.
    For the state process $\{X_t\}$ they suppose that $\{X_t\}$ has a
    Gaussian density kernel $q(x, y)$ of the form $q(x, y) = N(\rho x,
    \sigma)$ for some $\sigma > 0$ and $|\rho | < 1$.
    The constant discount parameter $\beta$ is assumed to satisfy $\ln \beta < - (1-\gamma)^2 \sigma^2/2$.

The conditions of Theorem~\ref{t:bk} hold under these conditions when $p = 2$.
Assumptions~\ref{a:p} and \ref{a:i} are obviously true.
The moment condition in Assumption~\ref{a:co} holds when $p=2$ because
\begin{equation*}
    (\Phi_t G_t )^2
    = [ \beta u'(c(X_t)) c(X_t) ]^2
    = \beta^2 \exp(2 (1-\gamma) X_t).
\end{equation*}
The expectation of this term is finite because $X_t$ is Gaussian.  In
addition, $Vh(x) = \beta \int h(y) q(x, y) \diff y$ is an eventually compact
linear operator on $L_2(\XX, \RR, \pi)$,  as shown in Proposition~\ref{p:eco}.
Finally, since $\Phi_t = \beta$ for all $t$, we have $\lL_\Phi^2 = \ln \beta <
0$.  Hence, the conclusions of Theorem~\ref{t:bk} all hold.  In particular, a
unique equilibrium price process with finite second moment exists.
Notice that we did not require the stronger restriction $\ln \beta < - (1-\gamma)^2
\sigma^2/2$ from \cite{brogueira2017existence}.

\section{Computing the Stability Exponent}\label{s:ctv}

The stability exponent $\lL_\Phi^p$ plays a key role our results.  In some
cases it can be calculated analytically, as in \eqref{eq:lpc} or
\eqref{eq:sea}.   In others it needs to be computed.  We begin with a
discussion of the first case.

\subsection{Analytical Results}\label{ss:lpc}

In this section we provide the proof of Proposition~\ref{p:lpc}, which
illustrates how $\lL_\Phi^p$ can be calculated analytically in a constant
volatility setting.

\begin{proof}[Proof of Proposition~\ref{p:lpc}]
    From~\eqref{eq:phimp}, we have
    \begin{equation*}
        \prod_{i=1}^n \Phi_i
        = \beta^n \exp
            \left\{
                n (\mu_d -\gamma \mu_c)
            + (\varphi-\gamma) \sum_{i=1}^n X_i
            + \sigma_d \sum_{i=1}^n \xi_i
            - \gamma \sigma_c \sum_{i=1}^n \epsilon_i
            \right\}.
    \end{equation*}
    Using \eqref{al:sar0}, we then have
    \begin{equation}
        \label{eq:cbu}
        \left( \EE_x \prod_{i=1}^n \Phi_i \right)^p
        = \beta^{np} \exp ( p a_n x + p b_n ),
    \end{equation}
    where $a_n := (\varphi- \gamma) \rho (1-\rho^n)/(1-\rho) $ and
    \begin{equation*}
        b_n := n (\mu_d - \gamma \mu_c)
             +  \frac{(\varphi-\gamma)^2 s^2_n + n \sigma_d^2 + n (\gamma \sigma_c)^2}{2} .
    \end{equation*}
    Here $s_n^2$ is the variance of $\sum_{i=1}^n X_i$.  The next step in
    calculating $\lL_\Phi^p$ is to take the unconditional expectation of
    \eqref{eq:cbu}, which amounts to integrating with respect to the stationary
    distribution $\pi = N(0, \sigma^2 / (1- \rho^2))$.  This yields
    \begin{equation*}
        \EE \left( \EE_x \prod_{i=1}^n \Phi_i \right)^p
        = \beta^{np}
        \exp \left( \frac{(p a_n \sigma)^2}{2(1 - \rho^2)}  + p b_n \right),
    \end{equation*}
    and hence
    \begin{equation}
        \label{eq:cbu2}
        \lL_\Phi^p
        = \lim_{n \to \infty}
        \left\{
            \ln \beta + \frac{p}{n} \frac{(a_n \sigma)^2}{2(1 - \rho^2)}  +
                \frac{b_n}{n}
        \right\}
        =
        \ln \beta + \lim_{n \to \infty} \frac{b_n}{n} ,
    \end{equation}
    where the second equality uses the fact that $a_n$ converges to a finite constant.
    Some algebra yields
    \begin{equation}
        \label{eq:av}
        \frac{s^2_n}{n}
        = \frac{\sigma^2}{1-\rho^2}
        \left\{
            1 +
             \frac{2(n-1)}{n} \frac{\rho}{1-\rho}
                -  \frac{2 \rho^2}{n} \cdot \frac{1- \rho^{n-1}}{(1-\rho)^2}
        \right\}.
    \end{equation}
    Combining this with \eqref{eq:cbu2}, we find that~\eqref{eq:lpc} holds.
\end{proof}

\subsection{Discretization}\label{ss:dmeth}

If the state space is finite, then, as discussed in Section~\ref{s:aar}, the exponent
$\lL_\Phi^p = \lL_\Phi$ is equal to the log of the spectral radius of a
valuation matrix and can therefore be obtained by
numerical linear algebra.  This leads to the following idea for handling
settings where the state space is infinite and no analytical expression for
$\lL_\Phi$ exists: discretize the state process and then proceed as for the
finite state case.  In this section we investigate whether or not this
procedure leads to a good approximation to the value of $\lL_\Phi^p$ from the
original (infinite state) model.

To answer this question, we will use the model investigated just above,
in Appendix~\ref{ss:lpc}.  This is convenient because, as shown in that
section, an analytical expression for $\lL_\Phi^p$ exists.
Existence of an analytical expression allows us to make a careful comparison
between the true solution and the approximation produced by discretization.

Our first step is to discretize the Gaussian AR(1) state process
\eqref{al:sar0} using the method of \cite{rouwenhorst1995}.  This produces a
finite Markov matrix $\Pi$ and finite state space with typical elements $x,y$.
In view of \eqref{eq:phimp}, the valuation matrix $V$ corresponding to this
discretized model is given by
\begin{equation}
    \label{eq:vij}
    V(x, y) :=  \beta \exp
            \left[
                \mu_d - \gamma \mu_c + (1-\gamma) x +
                \frac{\sigma_d^2 + (\gamma \sigma_c)^2}{2}
            \right]
    \Pi(x, y).
\end{equation}
We calculate the spectral radius $r(V)$ using linear algebra routines and,
from there, compute the associated value for the stability exponent via
\eqref{eq:rvi}.  Finally, we compare the result with the true value of
$\lL_\Phi$ obtained from the analytical expression \eqref{eq:lpc}.

Figure~\ref{f:discretization_vs_true_fig} shows this comparison when the
utility parameter $\gamma$ is set to $2.5$ and the consumption and dividend
parameters are set to the values in table~I of
\cite{bansal2004risks}.\footnote{\label{fn:mp}In particular,
$\mu_c = \mu_d = 0.0015$, $\rho=0.979$, $\sigma=0.00034$, $\sigma_c =
0.0078$, $\sigma_d = 0.035$ and $\varphi=1.0$.}  The vertical axis shows the value of
  $\lL_\Phi$.  The horizontal axis shows the level of discretization, indexed
by the number of states for $\{X_t\}$ generated at the Rouwenhorst step.  The
true value of $\lL_\Phi$ at these parameters, as calculate from
  \eqref{eq:lpc}, is $-0.0031545$.  The discrete approximation of $\lL_\Phi$
  is accurate up to six decimal places whenever the state space has more than
  6 elements.  Thus, the discrete approximation is sufficiently accurate to
    implement the test $\lL_\Phi < 0$ even
    for relatively coarse discretizations.  Moreover, as shown in the figure, the
    approximation of $\lL_\Phi$ converges to the true value as the number of
    states increases.  We experimented with other parameter values and found
    similar results.

\begin{figure}
    \centering
    \scalebox{0.7}{\includegraphics[clip=true, trim=0mm 0mm 0mm 0mm]{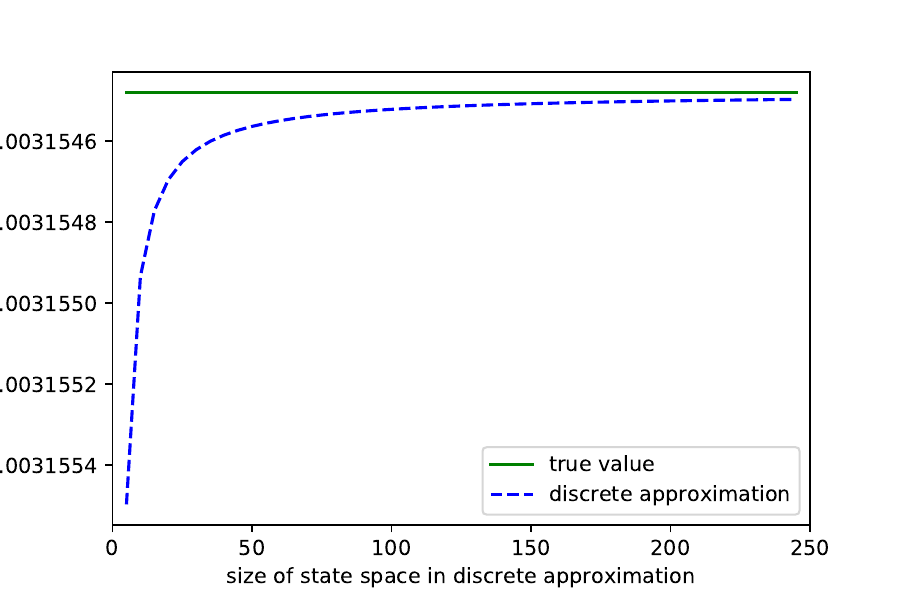}}
    \caption{\label{f:discretization_vs_true_fig} Accuracy of discrete approximation of $\lL_\Phi$}
\end{figure}

\subsection{A Monte Carlo Method}

Discretization works well for low dimensional
state processes but is susceptible to the curse of dimensionality.
For this reason,  we also propose a Monte Carlo
method that requires only the ability to simulate the SDF process
$\{\Phi_t\}$.  As well as being less susceptible to the curse of
dimensionality, this method
has the advantage that simulation of the SDF process can be targeted for
parallelization across CPUs or GPUs.

The idea behind the Monte Carlo method is to approximate $\lL_\Phi$ via
\begin{equation}
    \label{eq:mc}
    \lL_\Phi(n, m) :=
    \frac{1}{n}
    \ln
    \left\{
        \frac{1}{m} \sum_{j=1}^m  \prod_{i=1}^{n} \Phi^{(j)}_i
    \right\},
\end{equation}
where each $\Phi^{(j)}_1, \ldots, \Phi^{(j)}_n$ is an independently simulated
path of $\{\Phi_t\}$, and $n$ and $m$ are suitably chosen integers.
The idea relies on the strong law of large numbers, which yields
$\frac{1}{m} \sum_{j=1}^m \prod_{i=1}^{n} \Phi^{(j)}_i \to \EE \,
\prod_{i=1}^{n} \Phi_i $ with probability one, combined with the fact that
$Z_n \to Z$ almost surely implies $g(Z_n) \to g(Z)$ almost surely whenever $g
\colon \RR \to \RR$ is continuous.

These are asymptotic results.
Table~\ref{tab:t1} tests finite sample behavior.
We again use the constant volatility model from
Section~\ref{ss:crra}, comparing Monte Carlo approximations of $\lL_\Phi$ with
the true value obtained from in \eqref{eq:lpc}.
Consumption and dividend growth parameters are as in
footnote~\ref{fn:mp}. The true value of $\lL_\Phi$
is $-0.0031545$, as shown in the caption
for the table.  The interpretation of $n$ and $m$ in the table is consistent
with the left hand side of \eqref{eq:mc}.  For each $n, m$ pair, we compute
$\lL_\Phi(n, m)$ 1,000 times using independent draws and present the mean and
their standard error in the corresponding cell.  We find that
the Monte Carlo approximation is accurate up to
four decimal places when $n =750$ and standard deviations
are small.  At least for this model, the Monte Carlo method
can determine the sign of $\lL_\Phi$.

{\small
\begin{table}
    \centering
    \caption{Monte Carlo spectral radius estimates when $\lL_\Phi = -0.0031545$}
    \begin{tabular}{l|rrrrr}
    \toprule
        & m = 1000 & m = 2000 & m = 3000 & m = 4000 & m = 5000 \\
    \midrule
    n = 250 & -0.0033183 & -0.0032524 & -0.0032434 & -0.0032533 & -0.0032353 \\
             & (0.000003) & (0.000002) & (0.000001) & (0.000001) & (0.000001) \\
    n = 500 & -0.0032045 & -0.0032149 & -0.0031948 & -0.0031907 & -0.0031922 \\
             & (0.000002) & (0.000001) & (0.000001) & (0.000001) & (0.000001) \\
    n = 750 & -0.0031985 & -0.0031841 & -0.0031748 & -0.0031784 & -0.0031890 \\
             & (0.000002) & (0.000001) & (0.000001) & (0.000001) & (0.000001) \\
    \bottomrule
    \end{tabular}
    \label{tab:t1}
\end{table}
}

\section{Proofs}\label{s:ax}

If $\eE$ is a Banach lattice, then an
\emph{ideal} in $\eE$ is a vector subspace $L$ of $\eE$ with $x \in
L$ whenever $|x| \leq |y|$ and $y \in L$.   The \emph{spectral radius} of a
bounded linear operator $M$ from $\eE$ to itself is the supremum of
$|\lambda|$ for all $\lambda$ in the spectrum of $A$.  The operator $M$ is
called \emph{compact} if the image under $M$ of the unit ball in $\eE$ has
compact closure.  $M$ is called \emph{eventually compact} if
there exists an $i \in \NN$ such that $M^i$ is compact.  $M$ is called
\emph{positive} if it maps the positive cone of $\eE$ into itself.   A positive
linear operator $M$ is called \emph{irreducible} if the only closed ideals $J
\subset \eE$ satisfying $M(J) \subset J$ are $\{0\}$ and $\eE$.
See \cite{abramovich2002invitation} or \cite{meyer2012banach} for more
details.

If $\XX$ is an Polish space, $\pi$ is a finite Borel measure on $\XX$ and $p
\geq 1$, then $L_p(\pi) := L_1(\XX, \RR, \pi)$ denotes is the set of all Borel
measurable functions $f$ from $\XX$ to $\RR$ satisfying $\int |f|^p \diff \pi
< \infty$.  The norm on $L_p(\pi)$ is $\| f\| := (\int |f|^p \diff
\pi)^{1/p}$.  Functions equal $\pi$-almost everywhere are identified.
Convergence on $L_p(\pi)$ is with respect to the norm topology generated by
$\| \cdot \|$.  We write $f \leq g$ if $f \leq g$
pointwise $\pi$-almost everywhere, and $f \ll g$ if $f < g$ holds
pointwise $\pi$-almost everywhere.  The positive cone of $L_p(\pi)$ is all $f
\in L_p(\pi)$ with $f \geq 0$.  We denote this set by $\hH_p$, which conforms
with our previous definition (cf., Theorem~\ref{t:bk}).

\subsection{Operator Compactness in Spaces of Summable Functions}

%In this section we discuss
%some sufficient conditions and state a result that was used in
%section~\ref{s:aar}.\footnote{It is worth nothing that, since $V$ is a
    %positive operator and obviously linear, $V$ is a bounded linear operator
    %on $L_p(\pi)$ whenever it maps $L_p(\pi)$ to itself (see
%\cite{abramovich2002invitation}, theorem~1.31).  In particular, boundedness of
%$V$ need not be separately checked.}

%One sufficient condition is as follows: $V$ will be eventually compact if
%there is a bounded linear operator $M$ such that $V \leq M$ pointwise on
%$L_p(\pi)$ and $M$ is eventually compact.  Indeed, in that case there exists a
%$k \in \NN$ such that $M^k$ is compact and, since each $L_p$ space has order
%continuous norm (\cite{meyer2012banach}, \S2.4), it follows from
%corollary~2.37 in \cite{abramovich2002invitation} that $V^{2k}$ is compact.
%Hence $V$ is eventually compact.

Assumption~\ref{a:co} requires that $V$ is
eventually compact as a linear map from $L_p(\XX, \RR, \pi)$ to itself.
Here we give a sufficient condition focused on the applications
in Section~\ref{s:aar}.  Take $\XX = \RR$ and $p=2$.
  In the proposition below, $q$ is a
stochastic density kernel on $\RR^2$ with stationary density $\pi$ and two
step density kernel $q^2$.

\begin{proposition}
    \label{p:eco}
    Let $M$ be an operator that maps $f$ in $L_2(\pi)$ into
    \begin{equation}
        M f(x) = g(x) \int f(y) q(x, y) \diff y
        \qquad (x \in \RR),
    \end{equation}
    where $g$ is a measurable function from $\RR$ to $\RR_+$.  If $q$ is
    time-reversible and
    \begin{equation}
        \label{eq:bcg}
        \int g(x) q^2(x, x) \diff x < \infty,
    \end{equation}
    then $M$ is a compact linear operator on $L_2(\pi)$.\footnote{The
        statement that $q$ is \emph{time-reversible} means that $q(x, y)
        \pi(x) = q(y, x) \pi(y)$ for all $x, y \in \RR$.  A number of our
        results use the fact that $q(x, \cdot) = N(\rho x, \sigma^2)$ for some
        $\sigma > 0$ and $|\rho| < 1$ implies that $q$ is time-reversible.
    See, e.g., \cite{o2014analysis}.}
\end{proposition}

\begin{proof}
    We can express the operator $M$ as
    \begin{equation*}
        M f(x) = \int f(y) k(x, y) \pi(y) \diff y
        \quad \text{where} \quad
        k(x, y) := \frac{g(x) q(x, y)}{\pi(y)}.
    \end{equation*}
    By theorem~6.11 of \cite{weidmann2012linear}, the operator $M$ will be
    Hilbert--Schmidt in $L_2(\pi)$, and hence compact, if the kernel $k$ satisfies
    \begin{equation*}
        \int \int k(x, y)^2 \pi(x) \pi(y) \diff x \diff y < \infty.
    \end{equation*}
    Using the definition of $k$ and the time-reversibility of $q$, this translates
    to
    \begin{equation*}
        \int g(x) \int q(x, y) q(y, x) \diff y \diff x  < \infty.
    \end{equation*}
    This completes the proof because, by definition, $q^2(x, x) = \int q(x, y)
    q(y, x) \diff y$.
\end{proof}

\subsection{Remaining Proofs}

Throughout the following we impose Assumptions~\ref{a:p}--\ref{a:co}.
The symbol $p$ represents the constant in Assumption~\ref{a:co}.
As before, $\Pi$ is a stochastic kernel on $\XX$ and $\{X_t\}$ is a stationary
Markov process on $\XX$ with stochastic kernel $\Pi$ and common marginal
distribution $\pi$.\footnote{In other words, $\Pi$ is a function from $(\XX,
    \bB)$ to $[0, 1]$ such that $B \mapsto \Pi(x, B)$ is a probability measure
    on $(\XX, \bB)$ for each $x \in \XX$, and $x \mapsto \Pi(x, B)$ is
$\bB$-measurable for each $B \in \bB$.  The process $\{X_t\}$ satisfies
$\PP\{X_{t+1} \in B \given X_t = x\} = \Pi(x, B)$ for all $x$ in $\XX$ and $B
\in \bB$.}
The symbol $\EE_x$ will indicate conditioning on the event $X_0 = x$,
so that, for any $h \in L_1(\pi)$ and any $n \in \NN$, we have
\begin{equation}
    \label{eq:mp}
    \EE_x h(X_n) = \int h(x) \Pi^n(x, \diff y).
\end{equation}
Also, for convenience, we set
\begin{equation*}
    \hat g(x) := (Vg)(x) =
    \int
        \int \phi(x, y, \eta) g(x, y, \eta) \nu(\diff \eta)
    \Pi(x, \diff y).
\end{equation*}

\begin{lemma}
    \label{l:aj}
    For any $h \in \hH_p$ and all $x \in \XX$ we have
    \begin{equation}
        \label{eq:aj}
        V^n h(x)
            = \EE_x  \prod_{i=1}^n \Phi_i \, h(X_n) .
    \end{equation}
\end{lemma}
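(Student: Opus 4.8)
The plan is to prove \eqref{eq:aj} by induction on $n$. The key simplification is that $h$ is nonnegative and each $\Phi_i$ is strictly positive, so the right-hand side of \eqref{eq:aj} is a well-defined element of $[0,+\infty]$ for every $x$, and every interchange or iteration of integrals below is legitimate by Tonelli's theorem. Accordingly I would run the induction over the class of \emph{all} nonnegative Borel measurable functions $h\colon\XX\to[0,\infty)$ (allowing $Vh$ to take the value $+\infty$), and only at the very end specialize to $h\in\hH_p$.

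For the base case $n=1$, unfold the definitions: $\Phi_1=\phi(X_0,X_1,\eta_1)$, and since $\{\eta_t\}$ is {\sc iid} with common law $\nu$ and independent of $\{X_t\}$, conditioning on $X_0=x$ makes $X_1\sim\Pi(x,\cdot)$ and $\eta_1\sim\nu$ independent of $X_1$. Hence $\EE_x\,\Phi_1 h(X_1)=\int\Pi(x,\diff y)\int\nu(\diff\eta)\,\phi(x,y,\eta)h(y)$, which is exactly $Vh(x)$ by \eqref{eq:aop}. For the inductive step, suppose \eqref{eq:aj} holds at $n$ for every nonnegative Borel $h$, and set $g:=V^n h$. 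Then $g$ is again nonnegative and Borel measurable (the right-hand side of \eqref{eq:aj} is measurable in $x$ by Fubini--Tonelli for kernels), so the base case applies to $g$ and gives $V^{n+1}h(x)=Vg(x)=\EE_x\,\Phi_1\,g(X_1)$. Applying the induction hypothesis at the random point $X_1$ yields $g(X_1)=\EE_{X_1}\prod_{i=1}^n\Phi_i\,h(X_n)$. Using the Markov property of $\{X_t\}$, time-homogeneity of the kernel $\Pi$, and the fact that $\{\eta_t\}$ is {\sc iid} and independent of $\{X_t\}$, this conditional expectation equals $\EE\bigl[\prod_{i=2}^{n+1}\Phi_i\,h(X_{n+1})\,\big|\,\sigma(X_0,X_1,\eta_1)\bigr]$, i.e.\ the same expression with all time indices shifted forward by one. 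Since $\Phi_1=\phi(X_0,X_1,\eta_1)$ is $\sigma(X_0,X_1,\eta_1)$-measurable, multiplying by $\Phi_1$ and applying the tower property under $\EE_x$ gives $V^{n+1}h(x)=\EE_x\prod_{i=1}^{n+1}\Phi_i\,h(X_{n+1})$, completing the induction. Finally, if $h\in\hH_p$, then assumption~\ref{a:co} ensures $V$ maps $L_p(\pi)$ into itself, so $V^n h\in\hH_p$ and the common value in \eqref{eq:aj} is finite for $\pi$-almost every $x$.

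The main obstacle is the shift-of-indices step inside the induction: making rigorous, in the general (finite, Euclidean, or infinite-dimensional Polish) state space, the claim that $\EE_{X_1}\prod_{i=1}^n\Phi_i\,h(X_n)$ coincides with $\EE\bigl[\prod_{i=2}^{n+1}\Phi_i\,h(X_{n+1})\,\big|\,\sigma(X_0,X_1,\eta_1)\bigr]$. This requires carefully invoking the Markov property together with the independence and stationarity of the innovation sequence and the time-homogeneity of $\Pi$; everything else in the argument is routine once nonnegativity is used to bypass integrability concerns.
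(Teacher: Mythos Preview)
Your proposal is correct and follows essentially the same induction argument as the paper's proof: the same base case from the definition of $V$, and the same inductive step using $V^{n+1}h(x)=\EE_x\,\Phi_1\,V^n h(X_1)$ together with the Markov property and the tower rule. You are simply more explicit than the paper about the index shift and about using nonnegativity via Tonelli to sidestep integrability issues; the paper compresses all of this into ``an application of the law of iterated expectations completes the proof.''
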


\begin{proof}
    Equation \eqref{eq:aj} holds when $n=1$ because
    \begin{equation*}
        V h(x)
         = \int  \int \phi(x, y, \eta) \nu(\diff \eta) h(y) \Pi(x, \diff y)
         = \EE_x \, \Phi_1 h(X_1).
    \end{equation*}
    Now suppose \eqref{eq:aj} holds at arbitrary $n \in \NN$. We claim it also
    holds at $n+1$.  Indeed,
    \begin{equation*}
        V^{n+1} h(x)
         = \EE_x \, \Phi_1 V^n h(X_1)
          = \EE_x \, \Phi_1 \, \EE_{X_1} \prod_{i=2}^{n+1} \Phi_i \, h(X_{n+1})
          = \EE_x \, \EE_{X_1} \, \prod_{i=1}^{n+1} \Phi_i \, h(X_{n+1}).
    \end{equation*}
    An application of the law of iterated expectations completes the proof.
\end{proof}

\begin{lemma}
    \label{l:tpr}
    For each $h \in \hH_p$, $x \in \XX$ and $n \in \NN$ we have
    \begin{equation*}
        V^n h(x) = 0 \implies \int h(y) \Pi^n(x, \diff y) = 0.
    \end{equation*}
\end{lemma}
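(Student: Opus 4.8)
The statement follows quickly from Lemma~\ref{l:aj} together with the strict positivity built into Assumption~\ref{a:p}. The plan is to fix $h \in \hH_p$, $x \in \XX$ and $n \in \NN$, and to invoke the identity
\begin{equation*}
    V^n h(x) = \EE_x \prod_{i=1}^n \Phi_i \, h(X_n)
\end{equation*}
from Lemma~\ref{l:aj}. The integrand on the right is a nonnegative random variable: $h \geq 0$ since $h \in \hH_p$, and $\prod_{i=1}^n \Phi_i > 0$ since each $\Phi_i = \phi(X_{i-1}, X_i, \eta_i)$ with $\phi$ a (strictly) positive map by the representation \eqref{eq:azdef} and Assumption~\ref{a:p}. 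Hence $V^n h(x) = 0$ forces $\prod_{i=1}^n \Phi_i \, h(X_n) = 0$ almost surely under the conditional law $\PP_x$ (i.e., conditioning on $X_0 = x$).

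\textbf{Key steps, in order.} First, observe that $W := \prod_{i=1}^n \Phi_i \, h(X_n) \geq 0$ and $\EE_x W = V^n h(x) = 0$, so $W = 0$ $\PP_x$-almost surely. Second, divide by the strictly positive factor $\prod_{i=1}^n \Phi_i$ to conclude that $h(X_n) = 0$ $\PP_x$-almost surely. Third, take conditional expectations again: $\int h(y)\, \Pi^n(x, \diff y) = \EE_x h(X_n) = 0$, using the identity \eqref{eq:mp} for the $n$-step kernel. This is exactly the desired conclusion.

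\textbf{Expected obstacle.} There is no substantive obstacle; the only point requiring care is bookkeeping about where strict positivity versus almost-sure positivity is used. Specifically, one should note that $\phi > 0$ holds identically (not merely $\pi\otimes\Pi\otimes\nu$-a.e.) because $\phi$ is declared a positive Borel measurable map, so $\prod_{i=1}^n \Phi_i$ can be cancelled without introducing a null set; and one should make sure the expectation $\EE_x\prod_{i=1}^n \Phi_i\, h(X_n)$ is well defined (finite) before arguing from its vanishing, which is guaranteed because $h \in \hH_p \subset \hH_1$ and Lemma~\ref{l:aj} applies on $\hH_p$. Everything else is a one-line consequence of the fact that a nonnegative random variable with zero expectation vanishes almost surely.
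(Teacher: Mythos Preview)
Your proposal is correct and follows essentially the same route as the paper's proof: invoke Lemma~\ref{l:aj}, use nonnegativity of the integrand together with $\EE_x$-vanishing to force $\prod_{i=1}^n \Phi_i\, h(X_n) = 0$ $\PP_x$-a.s., cancel the strictly positive product to get $h(X_n)=0$ $\PP_x$-a.s., and then identify $\EE_x h(X_n)$ with $\int h(y)\,\Pi^n(x,\diff y)$ via \eqref{eq:mp}. Your extra remarks on finiteness and pointwise positivity are fine bookkeeping but not additional ideas.
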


\begin{proof}
    Fix $h \in \hH_p$, $x \in \XX$ and $n \in \NN$ with
    $V^n h(x) = 0$.  It follows from Lemma~\ref{l:aj} that
    $\EE_x  \prod_{i=1}^n \Phi_i \, h(X_n) = 0$, which in turn implies that
    $\prod_{i=1}^n \Phi_i \, h(X_n) = 0$ holds $\PP_x$-a.s.  But then, by the
    positivity in Assumption~\ref{a:p},
    $h(X_n) = 0$ holds $\PP_x$-a.s. Hence $\EE_x h(X_n) = 0$.
    By \eqref{eq:mp}, this is equivalent to $\int h(y) \Pi^n(x, \diff y) = 0$.
\end{proof}

\begin{lemma}
    \label{l:qiqi}
    If $h \in \hH_p$ with $h \gg 0$, then $V^n h \gg 0$ for all $n \in \NN$.
\end{lemma}

\begin{proof}
    It suffices to show this is true when $n=1$, after
    which we can iterate.  To this end, fix $h \in \hH_p$ with $h > 0$ on $B \in
    \bB$ with $\pi(B) = 1$.  Suppose that
    \begin{equation*}
        V h(x) = \int  h(y)
        \left[ \int \phi(x, y, \eta) \nu(\diff \eta) \right]
        \Pi(x, \diff y) = 0.
    \end{equation*}
    Since $\phi$ is positive, we must then have $\Pi(x, B) = 0$.  But $\pi$ is
    invariant, so $\pi(B) = \int \Pi(x, B) \pi(\diff x) = 0$.  Contradiction.
\end{proof}

\begin{lemma}
    \label{l:irr}
    The valuation operator $V$ is irreducible on $L_p(\pi)$.
\end{lemma}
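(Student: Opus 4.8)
The plan is to combine the standard description of closed ideals in $L_p(\pi)$ with the irreducibility hypothesis on the state process (Assumption~\ref{a:i}). Recall that, since $L_p(\pi)$ has order continuous norm, every closed ideal $J$ of $L_p(\pi)$ is a band, hence of the form $J_B := \{ f \in L_p(\pi) : f = 0 \ \pi\text{-a.e.\ on } B^c \}$ for some Borel set $B \subset \XX$, unique up to $\pi$-null modifications (see, e.g., \cite{meyer2012banach} or \cite{abramovich2002invitation}). The trivial ideals $\{0\}$ and $L_p(\pi)$ correspond to $\pi(B)=0$ and $\pi(B)=1$, so irreducibility of $V$ is equivalent to the statement that no Borel set $B$ with $0 < \pi(B) < 1$ satisfies $V(J_B) \subset J_B$.

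Arguing by contradiction, I would suppose such a $B$ exists. Because $\pi$ is a probability measure, $\1_B \in L_p(\pi)$, and $\1_B$ vanishes off $B$, so $\1_B \in J_B$; invariance then gives $V^n \1_B \in J_B$, i.e.\ $V^n \1_B = 0$ holds $\pi$-a.e.\ on $B^c$, for every $n \in \NN$. Applying Lemma~\ref{l:tpr} with $h = \1_B \in \hH_p$, at every point $x$ with $V^n \1_B(x) = 0$ we have $\int \1_B(y)\, \Pi^n(x, \diff y) = \Pi^n(x, B) = 0$. Hence, for each fixed $n$, $\Pi^n(x, B) = 0$ for $\pi$-almost every $x \in B^c$; intersecting these full-measure sets over the countably many $n \in \NN$ produces a set $B_0 \subset B^c$ with $\pi(B_0) = \pi(B^c) = 1 - \pi(B) > 0$ such that $\Pi^n(x, B) = 0$ for all $n \in \NN$ and all $x \in B_0$. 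This directly contradicts Assumption~\ref{a:i}, which (as spelled out in footnote~\ref{fn:di}) asserts that, since $\pi(B) > 0$, for every $x \in \XX$ there is some $n$ with $\Pi^n(x, B) > 0$. Since $B_0$ is nonempty, the contradiction is immediate, so $V$ has no nontrivial closed invariant ideal and is therefore irreducible.

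The only step that requires any care is the first one --- quoting the identification of closed ideals of $L_p(\pi)$ with measurable sets --- and this is a routine fact about $L_p$ spaces with order continuous norm, so it can simply be cited. If one prefers to avoid that citation, the same conclusion follows by taking $0 \neq g \geq 0$ in $J$, setting $B = \{ g > 0 \}$, noting $\min(ng, \1_B) \in J$ and $\min(ng, \1_B) \to \1_B$ in $L_p(\pi)$ as $n \to \infty$, and using closedness of $J$ to get $\1_B \in J$; after that the argument proceeds verbatim. In either case the substance of the proof is carried entirely by Lemma~\ref{l:tpr} and the definition of Markov irreducibility, so I do not anticipate any genuine obstacle.
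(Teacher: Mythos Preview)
Your proof is correct and follows essentially the same route as the paper's: both identify closed ideals of $L_p(\pi)$ with Borel sets, take the indicator of the support set as a test function, and use Lemma~\ref{l:tpr} to pass from $V^n$-vanishing to $\Pi^n$-vanishing, contradicting Markov irreducibility. Your labeling convention swaps $B$ and $B^c$ relative to the paper's, but the argument is otherwise identical; if anything, you are more careful than the paper in explicitly taking the countable intersection over $n$ to obtain a single point where $\Pi^n(x,B)=0$ for all $n$.
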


\begin{proof}
    Suppose to the contrary that there exists a closed ideal $J$ in $L_p(\pi)$
    such that $V$ is invariant on $J$ and $J$ is neither $\emptyset$ nor $L_p(\pi)$
    itself.  Since $J$ is a closed ideal in $L_p(\pi)$,
    there exists a set $B \in
    \bB$ such that $J = \setntn{f \in L_p(\pi)}{f = 0 \text{ $\pi$-a.e. on }
    B}$.\footnote{See, for example, \cite{gerlach2012new}, p.~765.}  Moreover,
    since $J$ is neither empty nor the whole space, it must be that, for this
    set $B$ that defines $J$, we have $0 < \pi(B) < 1$.

    Because $V$ is invariant on $J$, we have $V^n h \in J$ for all $h \in J$
    and $n \in \NN$.  In particular, $V^n \1_{B^c}$ is in $J$ for all $n \in
    \NN$.  This means that $V^n \1_{B^c}(x) = 0$ for $\pi$-almost all $x \in
    B$ and all $n$ in $\NN$.  Fixing an $x \in B$ and applying
    Lemma~\ref{l:tpr}, we then have $\Pi^n(x, B^c) = 0$ for all $n \in \NN$.
    But $\pi(B) < 1$, so $\pi(B^c) > 0$.  This contradicts irreducibility of
    the stochastic kernel $\Pi$, which in turn violates Assumption~\ref{a:i}.
\end{proof}

The following is a local spectral radius result suitable for $L_p(\pi)$ that
draws on \cite{zabreiko1967bounds} and
\cite{krasnosel2012approximate}.\footnote{The result suits $L_p(\pi)$ because
    it allows the interior of the positive cone to be empty.} The proof provided
here is due to Miros\l{}awa Zima (private communication).  In the statement of
the theorem, a quasi-interior element of the positive cone of a Banach lattice
$\eE$ is a nonnegative element $h$ satisfying $\la h, g \ra > 0$ for any
nonzero element of the positive cone of the dual space $\eE^*$.  (See
\cite{krasnosel2012approximate} for more details.)

\begin{theorem}
    \label{t:lsr}
    Let $h$ be an element of a Banach lattice $\eE$ and let $M$ be a positive
    and compact linear operator.  If $h$ is quasi-interior, then
        $\| M^n h \|^{1/n} \to r(M)$ as $n \to \infty$.
\end{theorem}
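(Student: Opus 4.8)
The plan is to establish two inequalities: $\limsup_n \| M^n h \|^{1/n} \leq r(M)$ and $\liminf_n \| M^n h \|^{1/n} \geq r(M)$. The upper bound is the soft one: since $M$ is a bounded operator, Gelfand's formula gives $\| M^n \|^{1/n} \to r(M)$, and $\| M^n h \| \leq \| M^n \| \, \| h \|$, so taking $n$-th roots and passing to the limit yields $\limsup_n \| M^n h \|^{1/n} \leq r(M)$ immediately. The entire content of the theorem is therefore in the reverse inequality, which is where quasi-interiority of $h$ and compactness of $M$ must be used.

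For the lower bound, the first step is to reduce to a spectral statement about an eigenvector. Since $M$ is compact and positive on a Banach lattice, the Krein--Rutman theorem applies: if $r(M) > 0$ then $r(M)$ is an eigenvalue with a nonnegative eigenvector $e \geq 0$, $Me = r(M) e$, and moreover the corresponding eigenfunctional on $\eE^*$ can be taken nonnegative and nonzero — call it $\varphi$, so $M^* \varphi = r(M) \varphi$ with $\varphi$ in the positive cone of $\eE^*$, $\varphi \neq 0$. (The case $r(M) = 0$ is trivial since the left side is nonnegative; one should dispose of it at the outset.) Now because $h$ is \emph{quasi-interior}, by definition $\la h, \varphi \ra > 0$. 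The key computation is then
\begin{equation*}
    \la M^n h, \varphi \ra = \la h, (M^*)^n \varphi \ra = r(M)^n \la h, \varphi \ra,
\end{equation*}
and since $\varphi$ is a bounded functional, $r(M)^n \la h, \varphi \ra = \la M^n h, \varphi \ra \leq \| \varphi \| \, \| M^n h \|$. Rearranging gives $\| M^n h \| \geq r(M)^n \la h, \varphi \ra / \| \varphi \|$; taking $n$-th roots and letting $n \to \infty$, the constant factor $\big( \la h, \varphi \ra / \| \varphi \| \big)^{1/n} \to 1$, so $\liminf_n \| M^n h \|^{1/n} \geq r(M)$. Combining with the upper bound finishes the proof, and in fact upgrades the $\limsup$ to an honest limit.

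The main obstacle is making sure the Krein--Rutman machinery delivers exactly what is needed in a Banach lattice whose positive cone may have empty interior — in particular that the spectral radius is attained as an eigenvalue with a \emph{nonzero} positive dual eigenvector, not merely a positive eigenvector of $M$ itself. This is precisely the point where one invokes the results of \cite{zabreiko1967bounds} and \cite{krasnosel2012approximate} cited before the theorem statement: the classical Krein--Rutman theorem requires a reproducing cone with nonempty interior, whereas the refined versions in those references handle general (normal, total) cones and still produce a positive eigenfunctional, which is all that the pairing argument above requires. A secondary technical point is the handling of the $r(M)=0$ edge case and confirming that "quasi-interior" in the sense defined just before the theorem is exactly the hypothesis that makes $\la h, \varphi \ra > 0$ for the relevant $\varphi$ — but this is immediate from the definition given, since $\varphi$ is a nonzero element of the positive cone of $\eE^*$.
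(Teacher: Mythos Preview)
Your argument is correct and takes a genuinely different route from the paper's proof. The paper proceeds by fixing $\lambda > \limsup_n \|M^n h\|^{1/n}$, forming the resolvent-type element $h_\lambda = \sum_{n \geq 0} \lambda^{-(n+1)} M^n h$, observing that $h_\lambda$ is quasi-interior and satisfies the domination inequality $M h_\lambda \leq \lambda h_\lambda$, and then invoking theorem~5.5(a) of \cite{krasnosel2012approximate} to conclude $r(M) \leq \lambda$; a separate appeal to \cite{danevs1987local} is then needed to upgrade the $\limsup$ to a genuine limit. Your approach instead produces a positive dual eigenvector $\varphi$ at $r(M)$ via Krein--Rutman and pairs it directly against $M^n h$, which gives the lower bound on $\liminf_n \|M^n h\|^{1/n}$ in one stroke. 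A pleasant byproduct is that your squeeze between $\liminf$ and $\limsup$ makes the Dane\v{s} result unnecessary: existence of the limit drops out automatically. The trade-off is that you are front-loading the Krein--Rutman machinery (existence of a nonzero positive eigenfunctional for $M^*$ at $r(M)$ when $M$ is compact and $r(M)>0$), whereas the paper's argument defers that and instead leans on a comparison principle for the spectral radius; since the paper invokes exactly this dual eigenvector later (see proposition~\ref{p:tfp}), your route is arguably the more economical one in context.
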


\begin{proof}
    Let $h$ and $M$ be as in the statement of the theorem and let $\eE_+$ be
    the positive cone of $\eE$.
    Let $r(h, M) := \limsup_{n \to \infty} \| M^n h \|^{1/n}$.    From the
    definition of $r(M)$ it is clear that $r(h, M) \leq r(M)$.  To see that
    the reverse inequality holds,  let $\lambda$ be a constant
    satisfying $\lambda > r(h, M)$ and let
    \begin{equation}
        \label{eq:xl}
        h_\lambda := \sum_{n=0}^\infty \frac{M^n h}{\lambda^{n+1}}.
    \end{equation}
    The point $h_\lambda$ is a well-defined element of $\eE_+$ by
        $\limsup_{n \to \infty} \| M^n h \|^{1/n} < \lambda$
    and Cauchy's root test.  It is also quasi-interior,
    since the sum in \eqref{eq:xl} includes the quasi-interior element $h$, and
    since $M$ maps $\eE_+$ into itself.  Moreover, by standard Neumann
    series theory (e.g.,
    \cite{krasnosel2012approximate}, theorem~5.1), the point $h_\lambda$
    also has the representation $h_\lambda = (\lambda I - M)^{-1} h$,
    from which we obtain $\lambda h_\lambda - M h_\lambda = h$.  Because
    $h \in \eE_+$, this implies that
         $M h_\lambda \leq \lambda h_\lambda$.
    Applying this last inequality, compactness of $M$,
    quasi-interiority of $h_\lambda$ and theorem~5.5 (a) of
    \cite{krasnosel2012approximate}, we must have $r(M) \leq \lambda$.
    Since this inequality was established for an arbitrary $\lambda$
    satisfying $\lambda > r(h, M)$, we conclude that $r(h, M) \geq r(M)$.

    We have shown that $\limsup_{n \to \infty} \| M^n h \|^{1/n} = r(M)$.
    Since $M$ is compact, Corollary~1 of \cite{danevs1987local}
    gives $\limsup_{n \to \infty} \| M^n h \|^{1/n} = \lim_{n \to \infty} \| M^n
        h \|^{1/n}$.
\end{proof}

\begin{theorem}
    \label{t:lsr2}
    The growth exponent $\lL_\Phi^p$ satisfies $\exp(\lL_\Phi^p) = r(V)$,
    where $r(V)$ is the spectral radius of $V$ in $L_p(\pi)$.
\end{theorem}

\begin{proof}
    Let $\1 = \1_\XX \equiv 1$ and let $\| \cdot\|$ be the norm in $L_p(\pi)$.  By Lemma~\ref{l:aj}, we
    have $V^n \1(x) = \EE_x  \prod_{i=1}^n \Phi_i$.  Using this and the
    definition of $\lL_\Phi^p$ in \eqref{eq:prpp}, we have
    \begin{equation*}
        \exp(\lL_\Phi^p)
        = \lim_{n \to \infty}
        \left\{
            \EE \left[ \EE_x  \prod_{t=1}^n \, \Phi_t \right]^p
        \right\}^{1/(np)}
            = \lim_{n \to \infty} \frac{1}{n} \ln \| V^n \1 \|.
    \end{equation*}
    As a consequence, it suffices to show that
    \begin{equation}
        \label{eq:lsr2}
        \lim_{n \to \infty} \|  V^n \1 \|^{1/n} = r(V).
    \end{equation}
    In doing so, we aim to apply~Theorem~\ref{t:lsr}.  We cannot do so
    directly because $V$ is not compact.
    However, by Assumption~\ref{a:co} we can choose an $i \in \NN$ such that $V^i$ is a compact linear operator on $L_p(\pi)$.  Fix $j \in \NN$ with $0 \leq j \leq i-1$.
    By Lemma~\ref{l:qiqi} we know that $V^j \1$ is positive $\pi$-almost
    everywhere on $\XX$, and is therefore quasi-interior.\footnote{By the
        Riesz Representation Theorem, the dual
        space of $L_p(\pi)$ is isometrically isomorphic to $L_q(\pi)$ where
        $1/p+1/q=1$.
        If $g$ is a nonnegative and nonzero element of $L_q(\pi)$
        then it is positive on a set of positive $\pi$ measure.  Since
        $f \gg 0$ on $\XX$, the produce $fg$ must be
        positive on a set of positive $\pi$ measure.  Hence $\int fg \diff \pi > 0$,
    so $f$ is quasi-interior.} As a result,
    Theorem~\ref{t:lsr} applied to $V^i$ with initial condition $h := V^j \1$ yields
    \begin{equation*}
        \| V^{in} V^j \1 \|^{1/n}
        = \| V^{in + j} \1 \|^{1/n}
        \to r(V^i)
        \qquad (n \to \infty).
    \end{equation*}
    But $r(V^i) = r(V)^i$, so
    $\| V^{in+j} \1 \|^{1/(in)} \to r(V)$ as $n \to \infty$.
    It follows that
    \begin{equation*}
        \| V^{in+j} \1 \|^{1/(in+j)} \to r(V).
    \end{equation*}
    As this is shown to be true for any integer $j$ satisfying  $0 \leq j \leq i-1$,
    we can conclude that \eqref{eq:lsr2} is valid.
\end{proof}

To prove Theorem~\ref{t:bk}, we will also need the following two lemmas:

\begin{lemma}
    \label{l:elc}
    The equilibrium price operator $T$ is a self-map on $\hH_p$.  It has a
    fixed point in $\hH_p$ if and only if there exist elements $h_0, h$ in
    $\hH_p$ such that $T^n h_0 \to h$ as $n \to \infty$.
\end{lemma}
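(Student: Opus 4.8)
The plan is to exploit the affine structure of $T$ and its continuity as a self-map of $L_p(\pi)$. Recall that $T h = V h + \hat g$, where $\hat g \in \hH_p$ is fixed under assumption~\ref{a:co} and $V$ is a positive linear self-map of $L_p(\pi)$, hence bounded (as noted in the discussion following assumption~\ref{a:co}, a positive linear operator that maps $L_p(\pi)$ into itself is automatically bounded). Consequently $T$ is Lipschitz on $L_p(\pi)$, with $\| T h_1 - T h_2 \| = \| V(h_1 - h_2) \| \le \| V \| \, \| h_1 - h_2 \|$, and $T$ maps $\hH_p$ into $\hH_p$ because $V$ is positive and $\hat g \ge 0$ with $\hat g \in \hH_p$. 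So $T$ is a continuous self-map of $\hH_p$.

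For the ``only if'' direction, suppose $T$ has a fixed point $h^* \in \hH_p$. Taking $g = h = h^*$, we have $T^n g = h^*$ for every $n$, so $T^n g \to h$ trivially, which gives the desired pair $(g, h)$ in $\hH_p$.

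For the ``if'' direction, suppose there exist $g, h \in \hH_p$ with $T^n g \to h$ as $n \to \infty$. Applying the continuous map $T$ gives $T^{n+1} g = T(T^n g) \to T h$. On the other hand, $\{T^{n+1} g\}_{n \ge 1}$ is a tail of the convergent sequence $\{T^n g\}_{n \ge 1}$, hence also converges to $h$. Since limits in the normed space $L_p(\pi)$ are unique, $T h = h$, and $h \in \hH_p$ by hypothesis, so $h$ is a fixed point of $T$ in $\hH_p$.

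There is no step of real difficulty here: the entire argument rests on continuity of $T$, which in turn reduces to boundedness of the positive linear operator $V$ on $L_p(\pi)$. The only point worth stating carefully is that this boundedness is automatic, so it need not be verified separately; beyond that, the result follows from uniqueness of limits.
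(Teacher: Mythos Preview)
Your proof is correct and follows essentially the same approach as the paper: continuity of $T$ (via boundedness of the positive linear operator $V$) plus uniqueness of limits for the ``if'' direction, and the trivial constant sequence $T^n h^* = h^*$ for the ``only if'' direction. The paper's version is slightly terser, but the argument is identical in substance.
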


\begin{proof}
    To see that $T$ is a self-map on $\hH_p$, fix $h \in \hH_p$ and
    recall from~\eqref{eq:deff} that
        $T h (x)
        = Vh(x) + \EE_x \, \Phi_{t+1} G_{t+1}$.
    The fact that $V$ maps $\hH_p$ to itself, which is implied by
    Assumption~\ref{a:co}, combined with Minkowski's inequality, means
    we need only prove that the function $m(x) := \EE_x \, \Phi_{t+1}
    G_{t+1}$ is in $\hH_p$. This will be true if $m(X_t)$ has finite $p$-th
    moment under $\EE$.  By Jensen's inequality and the law of iterated
    expectations, it suffices to show that $\EE \, (\Phi_{t+1} G_{t+1})^p <
    \infty$, which is true by the moment condition in Assumption~\ref{a:co}.

    To prove the second claim in Lemma~\ref{l:elc}, we
    suppose first that there exist $h_0, h$ in $\hH_p$ such that $T^n h_0 \to h$ as $n \to \infty$.
    Since $T$ maps $f$ into $Vf + \hat g$ and $V$ is a bounded linear operator on
    $L_p(\pi)$,  we know that $T$ is continuous as a self-map on $L_p(\pi)$.
    Letting $h_n = T^n h_0$, we have $h_n \to h$ and hence, by continuity, $T
    h_n \to Th$.  But, by the definition of the sequence $\{h_n\}$, we must
    also have $T h_n \to h$.  Hence $Th=h$.

    Conversely, if $T$ has a fixed point $f \in \hH_p$, then the condition in
    the statement of Lemma~\ref{l:elc} is satisfied with $h_0=h=f$.
\end{proof}

\begin{proposition}
    \label{p:tfp}
    If $T$ has a fixed point in $\hH_p$, then $\lL_\Phi^p < 0$.
\end{proposition}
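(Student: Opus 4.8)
The plan is to reduce the claim to showing $r(V) < 1$, where $r(V)$ is the spectral radius of $V$ regarded as a linear self-map on $L_p(\pi)$; by \eqref{eq:rvi} this is the same as $\lL_\Phi^p = \ln r(V) < 0$. The point of departure is that a fixed point $h^* \in \hH_p$ of $T$ satisfies, on iterating the identity $Tf = Vf + \hat g$ and using linearity of $V$,
\[
    h^* = T^n h^* = V^n h^* + \sum_{k=0}^{n-1} V^k \hat g
    \qquad (n \in \NN).
\]
Since $V$ is positive and $\hat g \geq 0$, the partial sums $s_n := \sum_{k=0}^{n-1} V^k \hat g$ increase pointwise and satisfy $0 \leq s_n \leq h^*$ $\pi$-a.e.; hence $s_n \to s$ pointwise with $0 \leq s \leq h^* \in L_p(\pi)$, and dominated convergence gives $s_n \to s$ in $L_p(\pi)$. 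In particular $\|V^k \hat g\| = \|s_{k+1} - s_k\| \to 0$ as $k \to \infty$.

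Next I would locate a quasi-interior vector along which powers of $V$ decay in norm. Assumption~\ref{a:p} (positivity of $\phi$, nontriviality of $G_t$) gives $\int \hat g \diff \pi = \EE[\Phi_1 G_1] > 0$, so $\hat g \neq 0$ in $\hH_p$, and lemma~\ref{l:qiqi}(b) then supplies $m \in \NN$ with $e := V^m \hat g \gg 0$, i.e.\ $e$ quasi-interior; moreover $\|V^k e\| = \|V^{k+m}\hat g\| \to 0$.

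The decisive step is to upgrade this decay into the strict bound $r(V) < 1$, and I would do this by contradiction. Suppose $r(V) \geq 1$ and fix $i \in \NN$ with $M := V^i$ compact (assumption~\ref{a:co}). Then $r(M) = r(V)^i \geq 1$ is a nonzero point of the spectrum of the compact operator $M$, hence an eigenvalue of $M$; since $M$ is positive and the positive cone $\hH_q$ of $L_q(\pi)$ (with $1/p + 1/q = 1$) is generating, a Krein--Rutman-type theorem for positive compact operators (see, e.g., \cite{krasnosel2012approximate} or \cite{abramovich2002invitation}) yields a nonzero $\ell \in \hH_q$ with $(V^*)^i \ell = r(V)^i \ell$. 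Because $e \gg 0$ and $\ell \geq 0$ is nonzero, $\la e, \ell \ra > 0$, and for every $n \in \NN$,
\[
    \la V^{in} e, \ell \ra = \la e, ((V^*)^i)^n \ell \ra = r(V)^{in} \la e, \ell \ra \geq \la e, \ell \ra > 0,
\]
using $r(V) \geq 1$. But $V^{in} e \to 0$ in $L_p(\pi)$ and $\ell$ is a bounded linear functional, so $\la V^{in} e, \ell \ra \to 0$, a contradiction; hence $r(V) < 1$.

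I expect the iteration identity and the dominated-convergence step to be routine, and the strict inequality $r(V) < 1$ to be the real obstacle: boundedness of $\{V^k e\}$ alone only forces subexponential growth along one ray, hence $r(V) \leq 1$, and turning this into a sub-unit spectral radius genuinely requires the positivity/compactness structure, exploited above through a strictly positive eigenfunctional of $V^*$. A secondary point worth care is confirming $\hat g \neq 0$ (so lemma~\ref{l:qiqi}(b) applies) and that quasi-interiority of $e$ really forces $\la e, \ell \ra > 0$ for any nonzero $\ell \geq 0$ in the dual cone.
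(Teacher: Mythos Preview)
Your argument is correct. Both proofs hinge on a Krein--Rutman eigenfunctional for the adjoint, but the routes differ. The paper works directly with $V^*$: irreducibility of $V$ (lemma~\ref{l:irr}) together with eventual compactness gives $e^* \gg 0$ in $L_q(\pi)$ with $V^* e^* = r(V) e^*$, and then pairing the single fixed-point identity $h = Vh + \hat g$ against $e^*$ yields $(1 - r(V))\la e^*, h\ra = \la e^*, \hat g\ra > 0$ in one line. You instead iterate the fixed-point identity to extract $\|V^k \hat g\| \to 0$, produce a quasi-interior vector $e = V^m\hat g$ via lemma~\ref{l:qiqi}(b), and then invoke the weaker Krein--Rutman statement (a merely nonzero positive eigenfunctional of the compact power $(V^i)^*$) to reach a contradiction. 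The paper's path is shorter and avoids the dominated-convergence step; your path has the modest advantage that it only needs a positive (not strictly positive) dual eigenvector, with irreducibility entering solely through lemma~\ref{l:qiqi}(b). Substantively the two arguments are close cousins.
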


\begin{proof}
    Let $V^*$ be the adjoint operator associated with $V$.
    Since $V$ is irreducible (see Lemma~\ref{l:irr}) and $V^i$ is compact for
    some $i$, the version of the
    Krein--Rutman theorem presented in lemma~4.2.11 of \cite{meyer2012banach}
    together with the Riesz Representation Theorem imply existence of
    an $e^*$ in the dual space $L_q(\pi)$ such that
    \begin{equation}
        \label{eq:pao}
        e^* \gg 0
        \;\; \text{ and} \quad
        V^* e^* = r(V) e^*.
    \end{equation}
    Let $h$ be a fixed point of $T$ in $\hH_p$.  Clearly $h$ is nonzero, since
    $T0 = V0 + \hat g = \hat g$ and $\hat g$ is not the zero function (see
    Assumption~\ref{a:p}).
    Moreover, since $h$ is a fixed point, we have $h = Vh + \hat g$ and hence,
    with the inner production notation $\la \phi, f \ra := \int \phi f \diff
    \pi$,
    \begin{equation*}
        \la e^*, h \ra
        = \la e^*, V h \ra + \la e^*, \hat g \ra
        = \la V^* e^*, h \ra + \la e^*, \hat g \ra
        = r(V) \la e^*, h \ra + \la e^*, \hat g \ra.
    \end{equation*}
    In other words,
    \begin{equation*}
        (1 - r(V)) \la e^*, h \ra = \la e^*, \hat g \ra.
    \end{equation*}

    Both $h$ and $\hat g$ are nonzero in $L_p(\pi)$ and $e^*$ is positive
    $\pi$-a.e., so $\la e^*, h \ra > 0$ and $\la e^*, \hat g \ra > 0$.
    It follows that $r(V) < 1$.  By Theorem~\ref{t:lsr2}, we have $\lL_\Phi^p = \ln
    r(V)$, which proves the claim in the lemma.
\end{proof}

\begin{proof}[Proof of Theorem~\ref{t:bk}]
    By Lemma~\ref{l:elc}, (b) and (c) of Theorem~\ref{t:bk} are equivalent, so
    it suffices to show that (d) $\implies$ (c) $\implies$ (a)
    $\implies$ (d).  Of these, the implications (d) $\implies$ (c)
    is trivial, and (c) $\implies$ (a) was established in
    Proposition~\ref{p:tfp}.  Hence we need only show that (a) $\implies$ (d).

    To see that (a) implies (d), suppose that $\lL_\Phi^p < 0$.  Then, by
    Theorem~\ref{t:lsr2}, we have $r(V) < 1$.  Using Gelfand's
    formula for the spectral radius, which states that $r(V) =
    \lim_{n\to\infty} \| V^n \|^{1/n}$ with $\| \cdot \|$ as the operator
    norm, we can choose $n \in \NN$ such that $\| V^n \| < 1$.  Then, for any $h,
    h' \in \hH_p$ we have
    \begin{equation*}
        \| T^n h - T^n h' \|
        = \| V^n h - V^n h' \|
        = \| V^n (h -  h') \|
        \leq \| V^n \| \cdot \|  h -  h' \|.
    \end{equation*}
    Observe that $\hH_p$ is closed in $L_p(\pi)$, since $L_p(\pi)$ is a Banach lattice.
    Hence $\hH_p$ is complete in the norm topology.  Existence, uniqueness and
    global stability now follow from
    a well-known extension to the Banach contraction mapping theorem
    (see, e.g., p.~272 of \cite{Wagner1982FixedPoint}).

    Lastly, to see that \eqref{eq:fsg} holds, suppose that (a)--(d) are true.
    Then $r(V) < 1$, which implies that
    $(I - V)^{-1}$ is well-defined on $\hH_p$ and equals
    $\sum_{i=0}^\infty V^i$ (see, e.g., theorem~2.3.1 and corollary~2.3.3
    of \cite{atkinson2009theoretical}).  In particular, the fixed point of $T$
    is given by $h^* = \sum_{n=0}^\infty V^n \hat g$.  Applying
    \eqref{eq:aj} to this sum verifies the claim in \eqref{eq:fsg}.
\end{proof}

\begin{proof}[Proof of Proposition~\ref{p:fc}]
    Fix $p \geq 1$.
    If Assumptions~\ref{a:p}--\ref{a:i} hold and $\XX$ is a finite set endowed
    with the discrete topology, then all functions from $\XX$ to $\RR$ are
    measurable and have finite $p$-th moment, so
    $L_p(\XX, \RR, \pi) = \RR^\XX$ and $\hH_p = \RR_+^\XX$. It follows that $\hat g \in
    \hH_p$ and $V$ is a bounded linear operator from $L_p(\XX, \RR, \pi)$ to
    itself (since every linear operator mapping a finite dimensional normed
    vector space to itself is bounded).  By the Heine--Borel theorem, bounded
    subsets in finite dimensional space have compact closure, so $V$ is also
    (eventually) compact.  Thus, Assumption~\ref{a:co} holds.
    Finally, $\lL_\Phi^p = \lL_\Phi^1$ by the identity in \eqref{eq:rvi},
    since, in a finite dimension normed linear space, the spectral radius is
    independent of the choice of norm (due to equivalence of norms combined
    with Gelfand's formula for the spectral radius).
\end{proof}

%\begin{proof}[Proof of Proposition~\ref{p:pmr}]
    %Fix $t \geq 0$.  To prove that $\PP \{ \bar Y_t < \infty \} =1$,
    %it suffices to show that $\EE \bar Y_t < \infty$, which, by the definition
    %of $\bar Y_t$ and the law
    %of iterated expectations, will hold whenever the infinite sum $\sum_{n=1}^\infty  \EE \, \prod_{i=1}^{n} \Phi_{t+i} \, G_{t+n}$ converges.
    %(The expectation is passed through the sum by nonnegativity of the sum
    %components combined with the Monotone Convergence Theorem, which is valid
    %regardless of whether or not the sum is finite.  See, e.g.,
    %\cite{dudley2002real}, theorem~4.3.2.) By
    %Cauchy's root criterion for convergence of sums, the sum converges
    %whenever $A < 1$, where
    %%
    %\begin{equation*}
        %A := \limsup_{n \to \infty} a_n^{1/n}
        %\quad \text{with} \quad
        %a_n := \EE \prod_{i=1}^{n} \Phi_{t+i} G_{t+n}.
    %\end{equation*}
    %%
    %We have
    %%
    %\begin{equation*}
        %\ln A
        %\leq \limsup_{n \to \infty}
            %\frac{1}{n} \ln  \EE \prod_{i=1}^{n} \Phi_{t+i} m
        %= \limsup_{n \to \infty} \frac{1}{n} \ln  \EE \prod_{i=1}^{n} \Phi_{t+i}
        %< 0,
    %\end{equation*}
    %%
    %where $m$ is the constant in Assumption~\ref{a:bg} and the final
    %inequality is due to \eqref{eq:efp}.  Hence $A < 1$ and $\EE \bar Y_t <
    %\infty$, as was to be shown.

    %Now suppose that \eqref{eq:efp} holds for all $t \geq 0$.  Then $\bar Y_t$
    %is finite with probability one for all $t$.  Substituting the definition
    %of $\bar Y_{t+1}$ into the right hand side of \eqref{eq:fwl} and using the
    %law of iterated expectations, it is straightforward to show that $\{\bar Y_t\}$ is a
    %solution to \eqref{eq:fwl}.  Details are omitted.
%\end{proof}

\bibliographystyle{ecta}

\bibliography{jet_final}

\begin{thebibliography}{47}
\newcommand{\enquote}[1]{``#1''}
\expandafter\ifx\csname natexlab\endcsname\relax\def\natexlab#1{#1}\fi

\bibitem[\protect\citeauthoryear{Abel}{Abel}{1990}]{abel1990asset}
\textsc{Abel, A.~B.} (1990): \enquote{Asset prices under habit formation and
  catching up with the Joneses,} \emph{The American Economic Review}, 38--42.

\bibitem[\protect\citeauthoryear{Abramovich, Abramovich, and
  Aliprantis}{Abramovich et~al.}{2002}]{abramovich2002invitation}
\textsc{Abramovich, Y.~A., Y.~A. Abramovich, and C.~D. Aliprantis} (2002):
  \emph{An invitation to operator theory}, vol.~1, American Mathematical Soc.

\bibitem[\protect\citeauthoryear{Alvarez and Jermann}{Alvarez and
  Jermann}{2001}]{alvarez_jermann:2001}
\textsc{Alvarez, F. and U.~J. Jermann} (2001): \enquote{Quantitative Asset
  Pricing Implications of Endogenous Solvency Constraints,} \emph{Review of
  Financial Studies}, 14, 1117--1151.

\bibitem[\protect\citeauthoryear{Alvarez and Jermann}{Alvarez and
  Jermann}{2005}]{alvarez2005}
---\hspace{-.1pt}---\hspace{-.1pt}--- (2005): \enquote{Using Asset Prices to
  Measure the Persistence of the Marginal Utility of Wealth,}
  \emph{Econometrica}, 73, 1977--2016.

\bibitem[\protect\citeauthoryear{Atkinson and Han}{Atkinson and
  Han}{2009}]{atkinson2009theoretical}
\textsc{Atkinson, K. and W.~Han} (2009): \emph{Theoretical Numerical Analysis:
  A Functional Analysis Framework}, vol.~39, Springer Science \& Business
  Media.

\bibitem[\protect\citeauthoryear{Backus, Gregory, and Zin}{Backus
  et~al.}{1989}]{backus_gregory_zin:1989}
\textsc{Backus, D.~K., A.~W. Gregory, and S.~E. Zin} (1989): \enquote{Risk
  Premiums in the Term Structure: {E}vidence from Artificial Economies,}
  \emph{Journal of Monetary Economics}, 24, 371--399.

\bibitem[\protect\citeauthoryear{Bansal and Yaron}{Bansal and
  Yaron}{2004}]{bansal2004risks}
\textsc{Bansal, R. and A.~Yaron} (2004): \enquote{Risks for the long run: A
  potential resolution of asset pricing puzzles,} \emph{The Journal of
  Finance}, 59, 1481--1509.

\bibitem[\protect\citeauthoryear{Barro}{Barro}{2006}]{barro:2006}
\textsc{Barro, R.~J.} (2006): \enquote{Rare Disasters and Asset Markets in the
  Twentieth Century,} \emph{Quarterly Journal of Economics}, 121, 823--866.

\bibitem[\protect\citeauthoryear{Blanchard and Kahn}{Blanchard and
  Kahn}{1980}]{blanchard1980solution}
\textsc{Blanchard, O.~J. and C.~M. Kahn} (1980): \enquote{The solution of
  linear difference models under rational expectations,} \emph{Econometrica},
  1305--1311.

\bibitem[\protect\citeauthoryear{Borovi{\v{c}}ka, Hansen, and
  Scheinkman}{Borovi{\v{c}}ka et~al.}{2016}]{borovivcka2016misspecified}
\textsc{Borovi{\v{c}}ka, J., L.~P. Hansen, and J.~A. Scheinkman} (2016):
  \enquote{Misspecified recovery,} \emph{The Journal of Finance}, 71,
  2493--2544.

\bibitem[\protect\citeauthoryear{Borovi{\v{c}}ka and
  Stachurski}{Borovi{\v{c}}ka and Stachurski}{2020}]{borovivcka2020necessary}
\textsc{Borovi{\v{c}}ka, J. and J.~Stachurski} (2020): \enquote{Necessary and
  sufficient conditions for existence and uniqueness of recursive utilities,}
  \emph{The Journal of Finance}, 75, 1457--1493.

\bibitem[\protect\citeauthoryear{Brogueira and Sch{\"u}tze}{Brogueira and
  Sch{\"u}tze}{2017}]{brogueira2017existence}
\textsc{Brogueira, J. and F.~Sch{\"u}tze} (2017): \enquote{Existence and
  uniqueness of equilibrium in Lucas’ asset pricing model when utility is
  unbounded,} \emph{Economic Theory Bulletin}, 5, 179--190.

\bibitem[\protect\citeauthoryear{Calin, Chen, Cosimano, and Himonas}{Calin
  et~al.}{2005}]{calin2005solving}
\textsc{Calin, O.~L., Y.~Chen, T.~F. Cosimano, and A.~A. Himonas} (2005):
  \enquote{Solving asset pricing models when the price--dividend function is
  analytic,} \emph{Econometrica}, 73, 961--982.

\bibitem[\protect\citeauthoryear{Campbell and Cochrane}{Campbell and
  Cochrane}{1999}]{campbell1999force}
\textsc{Campbell, J.~Y. and J.~H. Cochrane} (1999): \enquote{By Force of Habit:
  A Consumption-Based Explanation of Aggregate Stock Market Behavior,}
  \emph{The Journal of Political Economy}, 107, 205--251.

\bibitem[\protect\citeauthoryear{Christensen}{Christensen}{2017}]{christensen2017nonparametric}
\textsc{Christensen, T.~M.} (2017): \enquote{Nonparametric stochastic discount
  factor decomposition,} \emph{Econometrica}, 85, 1501--1536.

\bibitem[\protect\citeauthoryear{Christensen}{Christensen}{2020}]{christensen2020existence}
---\hspace{-.1pt}---\hspace{-.1pt}--- (2020): \enquote{Existence and uniqueness
  of recursive utilities without boundedness,} .

\bibitem[\protect\citeauthoryear{Cogley and Sargent}{Cogley and
  Sargent}{2008}]{cogley_sargent:2008}
\textsc{Cogley, T. and T.~J. Sargent} (2008): \enquote{The Market Price of Risk
  and the Equity Premium: {A} Legacy of the {G}reat {D}epression?}
  \emph{Journal of Monetary Economics}, 55, 454--476.

\bibitem[\protect\citeauthoryear{Collin-Dufresne, Johannes, and
  Lochstoer}{Collin-Dufresne
  et~al.}{2016}]{collindufresne_johannes_lochstoer:2016}
\textsc{Collin-Dufresne, P., M.~Johannes, and L.~A. Lochstoer} (2016):
  \enquote{Parameter Learning in General Equilibrium: {T}he Asset Pricing
  Implications,} \emph{American Economic Review}, 106, 664--698.

\bibitem[\protect\citeauthoryear{Dane{\v{s}}}{Dane{\v{s}}}{1987}]{danevs1987local}
\textsc{Dane{\v{s}}, J.} (1987): \enquote{On local spectral radius,}
  \emph{{\v{C}}asopis pro p{\v{e}}stov{\'a}n{\'\i} matematiky}, 112, 177--187.

\bibitem[\protect\citeauthoryear{Duffie}{Duffie}{2001}]{duffie2001dynamic}
\textsc{Duffie, D.} (2001): \emph{Dynamic Asset Pricing Theory}, Princeton
  University Press.

\bibitem[\protect\citeauthoryear{Forster and Nagy}{Forster and
  Nagy}{1991}]{forster1991local}
\textsc{Forster, K.-H. and B.~Nagy} (1991): \enquote{On the local spectral
  radius of a nonnegative element with respect to an irreducible operator,}
  \emph{Acta Universitatis Szegediensis}, 55, 155--166.

\bibitem[\protect\citeauthoryear{Gerlach and Nittka}{Gerlach and
  Nittka}{2012}]{gerlach2012new}
\textsc{Gerlach, M. and R.~Nittka} (2012): \enquote{A new proof of Doob's
  theorem,} \emph{Journal of Mathematical Analysis and Applications}, 388,
  763--774.

\bibitem[\protect\citeauthoryear{Hansen}{Hansen}{2012}]{hansen2012dynamic}
\textsc{Hansen, L.~P.} (2012): \enquote{Dynamic valuation decomposition within
  stochastic economies,} \emph{Econometrica}, 80, 911--967.

\bibitem[\protect\citeauthoryear{Hansen and Richard}{Hansen and
  Richard}{1987}]{hansen_richard:1987}
\textsc{Hansen, L.~P. and S.~F. Richard} (1987): \enquote{The Role of
  Conditioning Information in Deducing Testable Restrictions Implied by Dynamic
  Asset Pricing Models,} \emph{Econometrica}, 55, 587--613.

\bibitem[\protect\citeauthoryear{Hansen and Scheinkman}{Hansen and
  Scheinkman}{2009}]{hansen2009long}
\textsc{Hansen, L.~P. and J.~A. Scheinkman} (2009): \enquote{Long-Term Risk: An
  Operator Approach,} \emph{Econometrica}, 77, 177--234.

\bibitem[\protect\citeauthoryear{Knill}{Knill}{1992}]{knill1992positive}
\textsc{Knill, O.} (1992): \enquote{Positive Lyapunov exponents for a dense set
  of bounded measurable SL (2, {$\mathbb{R}$})-cocycles,} \emph{Ergodic Theory
  and Dynamical Systems}, 12, 319--331.

\bibitem[\protect\citeauthoryear{Kocherlakota}{Kocherlakota}{1990}]{kocherlakota:1990}
\textsc{Kocherlakota, N.} (1990): \enquote{On Tests of Representative Consumer
  Asset Pricing Models,} \emph{Journal of Monetary Economics}, 26, 285--304.

\bibitem[\protect\citeauthoryear{Krasnosel'skii, Vainikko, Zabreyko, Ruticki,
  and Stet'senko}{Krasnosel'skii et~al.}{2012}]{krasnosel2012approximate}
\textsc{Krasnosel'skii, M., G.~Vainikko, R.~Zabreyko, Y.~Ruticki, and
  V.~Stet'senko} (2012): \emph{Approximate Solution of Operator Equations},
  Springer Netherlands.

\bibitem[\protect\citeauthoryear{Kreps}{Kreps}{1981}]{kreps1981arbitrage}
\textsc{Kreps, D.~M.} (1981): \enquote{Arbitrage and equilibrium in economies
  with infinitely many commodities,} \emph{Journal of Mathematical Economics},
  8, 15--35.

\bibitem[\protect\citeauthoryear{Lorenz, Schmedders, and Schumacher}{Lorenz
  et~al.}{2020}]{lorenz2020nonlinear}
\textsc{Lorenz, F., K.~Schmedders, and M.~Schumacher} (2020):
  \enquote{Nonlinear Dynamics in Conditional Volatility,} Tech. rep., SSRN
  3575458.

\bibitem[\protect\citeauthoryear{Lucas}{Lucas}{1978}]{lucas1978asset}
\textsc{Lucas, R.~E.} (1978): \enquote{Asset prices in an exchange economy,}
  \emph{Econometrica}, 1429--1445.

\bibitem[\protect\citeauthoryear{Martin and Ross}{Martin and
  Ross}{2019}]{martin_ross:2019}
\textsc{Martin, I.~W. and S.~A. Ross} (2019): \enquote{Notes on the Yield
  Curve,} \emph{Journal of Financial Economics}, 133, 689--702.

\bibitem[\protect\citeauthoryear{Mehra and Prescott}{Mehra and
  Prescott}{1985}]{mehra1985equity}
\textsc{Mehra, R. and E.~C. Prescott} (1985): \enquote{The equity premium: A
  puzzle,} \emph{Journal of Monetary Economics}, 15, 145--161.

\bibitem[\protect\citeauthoryear{Meyer-Nieberg}{Meyer-Nieberg}{2012}]{meyer2012banach}
\textsc{Meyer-Nieberg, P.} (2012): \emph{Banach lattices}, Springer Science \&
  Business Media.

\bibitem[\protect\citeauthoryear{Meyn and Tweedie}{Meyn and
  Tweedie}{2009}]{meyn2009markov}
\textsc{Meyn, S. and R.~L. Tweedie} (2009): \emph{Markov chains and stochastic
  stability}, Cambridge University Press.

\bibitem[\protect\citeauthoryear{O'Donnell}{O'Donnell}{2014}]{o2014analysis}
\textsc{O'Donnell, R.} (2014): \emph{Analysis of boolean functions}, Cambridge
  University Press.

\bibitem[\protect\citeauthoryear{Pohl, Schmedders, and Wilms}{Pohl
  et~al.}{2018}]{pohl2018higher}
\textsc{Pohl, W., K.~Schmedders, and O.~Wilms} (2018): \enquote{Higher Order
  Effects in Asset Pricing Models with Long-Run Risks,} \emph{The Journal of
  Finance}, 73, 1061--1111.

\bibitem[\protect\citeauthoryear{Pohl, Schmedders, and Wilms}{Pohl
  et~al.}{2019}]{pohl2019rel}
---\hspace{-.1pt}---\hspace{-.1pt}--- (2019): \enquote{Relative Existence for
  Recursive Utility,} Tech. rep., SSRN Working paper 3432469.

\bibitem[\protect\citeauthoryear{Qin and Linetsky}{Qin and
  Linetsky}{2017}]{qin2017long}
\textsc{Qin, L. and V.~Linetsky} (2017): \enquote{Long-Term Risk: A Martingale
  Approach,} \emph{Econometrica}, 85, 299--312.

\bibitem[\protect\citeauthoryear{Rouwenhorst}{Rouwenhorst}{1995}]{rouwenhorst1995}
\textsc{Rouwenhorst, K.~G.} (1995): \enquote{Asset pricing implications of
  equilibrium business cycle models,} in \emph{Frontiers of Business Cycle
  Research}, Princeton University Press, 294--330.

\bibitem[\protect\citeauthoryear{Schorfheide, Song, and Yaron}{Schorfheide
  et~al.}{2018}]{schorfheide2018identifying}
\textsc{Schorfheide, F., D.~Song, and A.~Yaron} (2018): \enquote{Identifying
  long-run risks: A Bayesian mixed-frequency approach,} \emph{Econometrica},
  86, 617--654.

\bibitem[\protect\citeauthoryear{van Binsbergen, Hueskes, Koijen, and
  Vrugt}{van Binsbergen et~al.}{2013}]{binsbergen_hueskes_koijen_vrugt:2013}
\textsc{van Binsbergen, J., W.~Hueskes, R.~S. Koijen, and E.~B. Vrugt} (2013):
  \enquote{Equity Yields,} \emph{Journal of Financial Economics}, 110,
  503--519.

\bibitem[\protect\citeauthoryear{van Binsbergen, Brandt, and Koijen}{van
  Binsbergen et~al.}{2012}]{binsbergen_brandt_koijen:2012}
\textsc{van Binsbergen, J. H.~v., M.~W. Brandt, and R.~S.~J. Koijen} (2012):
  \enquote{On the Timing and Pricing of Dividends,} \emph{American Economic
  Review}, 102, 1596--1618.

\bibitem[\protect\citeauthoryear{Wagner}{Wagner}{1982}]{Wagner1982FixedPoint}
\textsc{Wagner, C.~H.} (1982): \enquote{A Generic Approach to Iterative
  Methods,} \emph{Mathematics Magazine}, 55, 259--273.

\bibitem[\protect\citeauthoryear{Weidmann}{Weidmann}{2012}]{weidmann2012linear}
\textsc{Weidmann, J.} (2012): \emph{Linear operators in Hilbert spaces},
  vol.~68, Springer Science \& Business Media.

\bibitem[\protect\citeauthoryear{Weil}{Weil}{1989}]{weil:1989}
\textsc{Weil, P.} (1989): \enquote{The Equity Premium Puzzle and the Risk-Free
  Rate Puzzle,} \emph{Journal of Monetary Economics}, 24, 401--421.

\bibitem[\protect\citeauthoryear{Zabreiko, Krasnosel'skii, and
  Stetsenko}{Zabreiko et~al.}{1967}]{zabreiko1967bounds}
\textsc{Zabreiko, P., M.~Krasnosel'skii, and V.~Y. Stetsenko} (1967):
  \enquote{Bounds for the spectral radius of positive operators,}
  \emph{Mathematical Notes}, 1, 306--310.

\end{thebibliography}

\end{document}